\documentclass[journal]{IEEEtran}
\usepackage{epsfig}
\usepackage{amsmath}
\usepackage{amsfonts}
\usepackage{bbm}
\usepackage{cite}
\usepackage{graphicx}
\usepackage{multirow}
\usepackage{longtable}
\usepackage{txfonts}
\usepackage{amsfonts} 
\newtheorem{theorem}{Theorem}

\newtheorem{lemma}{Lemma}
\newtheorem{proposition}{Proposition}

\newtheorem{remark}{Remark}
\newtheorem{example}{Example}

\newcommand{\tabincell}[2]{\begin{tabular}{@{}#1@{}}#2\end{tabular}}

\hyphenation{op-tical net-works semi-conduc-tor}
\begin{document}

\title{Period Distribution of Inversive Pseudorandom Number Generators Over Galois Rings}
\author{ Bo~Zhou~\IEEEmembership{}
        and~Qiankun~Song\IEEEmembership{}
\thanks{~This work was partially supported by the National Natural Science Foundation
of China under Grant 60974132, the Natural Science Foundation Project of CQ CSTC2011BA6026 and the Scientific \& Technological Research Projects of CQ KJ110424.}
\thanks{~B. Zhou is with College of Information Science \& Engineering, Chongqing Jiaotong University, Chongqing 400074, P.R. China. (e-mail:
zhoubocncq@163.com) }
\thanks{~Q.K. Song is with Department of Mathematics, Chongqing Jiaotong University,
Chongqing 400074, P.R. China. (e-mail:
qiankunsong@163.com)}
}

\markboth{SUBMITTED TO IEEE TRANSACTIONS ON INFORMATION THEORY}{Period Distribution of Inversive Pseudorandom Number Generators Over Galois Ring ${\rm Z}_{p^{e}}$}%

\maketitle
\date{}

\begin{abstract}
In 2009, Sol\'{e} and Zinoviev (\emph{Eur. J. Combin.}, vol. 30, no. 2, pp. 458-467, 2009) proposed an open problem of arithmetic interest to study the period of the inversive pseudorandom number generators (IPRNGs) and to give conditions bearing on $a, b$ to achieve maximal period, we focus on resolving this open problem. In this paper, the period distribution of the IPRNGs over the Galois ring $({\rm Z}_{p^{e}},+,\times)$ is considered, where $p>3$ is a prime and $e\geq 2$ is an integer. The IPRNGs are transformed to $2$-dimensional linear feedback shift registers (LFSRs) so that the analysis of the period distribution of the IPRNGs is transformed to the analysis of the period distribution of the LFSRs. Then, by employing some analytical approaches, the full information on the period distribution of the IPRNGs is obtained, which is to make exact statistics about the period of the IPRNGs then count the number of IPRNGs of a specific period when $a$, $b$ and $x_{0}$ traverse all elements in ${\rm Z}_{p^{e}}$. The analysis process also indicates how to choose the parameters and the initial values such that the IPRNGs fit specific periods.
\end{abstract}
\begin{IEEEkeywords}
Inversive pseudorandom number generator (IPRNG), linear feedback shift register (LFSR), period distribution, Galois ring.
\end{IEEEkeywords}

\IEEEpeerreviewmaketitle

\section{Introduction}
A pseudorandom number generator (PRNG) is a deterministic algorithm that produces a long sequence of numbers that appear random and indistinguishable from a stream of random numbers, which is widely employed in engineering applications, e.g., generation of cryptographic keys and random initialization of certain variables in cryptographic protocols \cite{s17}. PRNGs are implemented on finite-state machines, thus, the sequences generated by them are ultimately periodic. In cryptographic applications of PRNGs, a long period is often required. In this case, the full information on the period distribution of the PRNGs plays an important role. If the full information on the period distribution of PRNGs is known, one will be able to choose the suitable parameters and initial values such that the PRNGs fit specific periods.

In \cite{c4,c5,c6,c3}, the detailed period distribution of several linear map based PRNGs, such as the Arnold cat map \cite{a11} and the Chebychev map \cite{k12}, have been studied. In \cite{e8}, a nonlinear map based PRNG called IPRNG was proposed, which is shown as follows:
$$
x_{n+1}=\left\{\begin{array}{cccc}
a x_{n}^{-1}+b {\rm mod}p,&x_{n}\neq0\\
b,&x_{n}=0
\end{array}\right.,
$$
for all $n\geq0$, where $a,b\in{\rm GF}(p)$ and its initial value is $x_{0}\in{\rm GF}(p)$.

Soon afterwards, the study on the properties of IPRNGs has become a hot topic. In \cite{g16,n17,n10,n14}, the distribution properties of the IPRNGs were studied. In \cite{g15}, the complexity profile of the IPRNGs was considered. In \cite{c9}, the period of the IPRNGs was  investigated, the considered state space was a Galois field, but the authors did not provide the full information on the period distribution of IPRNGs. Here, we will further consider the full information on the period distribution of IPRNGs over $({\rm Z}_{p^{e}},+,\times)$. However, the structure of $({\rm Z}_{p^{e}},+,\times)$ is more complicated than which of Galois fields, that is, $({\rm Z}_{p^{e}},+,\times)$ contains many zero divisors but the Galois field does not.

In 2009, Sol\'{e} and Zinoviev \cite{s7} provided a novel construction of IPRNGs as follows:
$$
\phi(p^{k}x)=\left\{\begin{array}{cccc}
p^{k}a x^{-1}+b &x\in{\rm R}^{\times}\\
b&x=0
\end{array}\right.,
$$
where ${\rm R}$ is a Galois ring, ${\rm R}^{\times}$ is the group of units of ${\rm R}$, $\phi$ is the map from ${\rm R}$ to itself,  $a,b\in {\rm R}^{\times}$ and $x_{0}\in{\rm R}$. The discrepancy estimates of the IPRNGs both for the full period and for certain special parts of the period was considered. In order to generalize these estimates to arbitrary parts of the period, the authors proposed an open problem of arithmetic interest to study the period of the inversive pseudorandom number generators and to give conditions bearing on $a, b$ to achieve maximal period.

Motivated by the above discussions, we focus on analyzing the full information on the period distribution of the inversive pseudorandom number generators (IPRNGs) over the Galois ring $({\rm Z}_{p^{e}},+,\times)$, where $p>3$ is a prime and $e\geq 2$ is an integer. The IPRNGs considered in this paper are transformed to $2$-dimensional LFSRs so that the analysis of the period distribution of the IPRNGs is transformed to the analysis of the period distribution of the LFSRs. Then, the full information on the period distribution of IPRNGs is obtained by some analytical approaches, i.e., analyzing the general terms of the LFSRs and the order of the roots of the characteristic polynomial of the LFSRs. The analysis process also indicates how to choose the parameters and the initial values such that the IPRNGs fit specific periods. It is noteworthy that the analysis of the order of the roots of the polynomials is also useful in the analysis of the period of the polynomials which is an interesting problem in the analysis of sequences over Galois rings \cite{f18,f19,f20,f22}.
\section{Preliminaries}
In this section, some concepts and notations on Galois rings and IPRNGs employed in this paper are introduced. For more detailed knowledge of Galois fields and Galois rings, please refer to \cite{l2,w1}.
\subsection{Galois Rings of Characteristic $p^{e}$}
Let $p>3$ be a prime and $e\geq2$ be an integer. $({\rm Z}_{p^{e}},+,\times)$ denotes a Galois ring where addition and multiplication
are all modular operations. A monic polynomial $f(t)$ is said to be a basic irreducible polynomial of degree $n$ over ${\rm Z}_{p^{e}}$, if $f(t)$ mod $p$ is a monic irreducible polynomial over ${\rm Z}_{p}$. The Galois ring ${\rm R}_{e,n} =
{\rm GR}(p^{e}, n)$ is the unique extension of degree $n$ over ${\rm Z}_{p^{e}}$ and is isomorphic with ${\rm Z}_{p^{e}}[t]/(f(t))$, where $f(t)$ is a monic basic irreducible polynomial of degree $n$ over ${\rm Z}_{p^{e}}[t]$. ${\rm R}_{e, n}$ is a local ring with unique maximal ideal $(p)=p{\rm R}_{e,n}$, which contains all zero divisors and zeros of ${\rm R}_{e, n}$. The units ${\rm R}^{\times}_{e,n}= {\rm R}_{e,n}\backslash (p)$ are contained in a multiplicative group with the following structure:
$$
{\rm R}^{\times}_{e,n}= G_{1}\times G_{2}
$$
where $G_{1}$ is a cyclic group of order $p^{n}-1$ and $G_{2}$ is a direct product of $n$ cyclic groups each of order $p^{e-1}$.

Define $\Gamma_{e,n}=\{0,1,\xi,\ldots,\xi^{p^{n}-2}\}$ be the Teichm\"{u}ller set in ${\rm R}_{e,n}$, where $\xi\in{\rm R}_{e,n}$ is an nonzero element of order $p^{n}-1$ and $\Gamma^{\times}_{e,n}=\Gamma_{e,n}\backslash \{0\}$. Then $G_{1}=\langle\xi\rangle$ is of order $p^{n}-1$ and $G_{2}=\{1+\theta:\theta\in(p)\}$ is of order $p^{(e-1)n}$.

It can be shown that every element $c\in {\rm R}_{e,n}$ has a unique $p$-adic expansion
$$
c=a_{0}+a_{1}p+\ldots+a_{e-1}p^{e-1}
$$
where $a_{0},a_{1},\ldots,a_{e-1}\in\Gamma_{e,n}$.

Throughout this paper, all the arithmetical operations are in $({\rm R}_{e,n},+,\times)$. For $\alpha\in{\rm R}_{e,n}$, denote ${\rm ord}(\alpha)$ as the order of $\alpha$. $\varphi(n)$, i.e., Euler¡¯s totient function, denotes the number of positive integers which are both less than or equal to the positive integer and coprime with $n$.
\subsection{{\rm IPRNGs} in ${\rm Z}_{p^{e}}$}
In this paper, we study the following IPRNG over Galois rings, which is a direct generalization of the IPRNGs considered in \cite{e8}. Given an arbitrary element $x\in{\rm Z}_{p^{e}}$, the IPRNGs over ${\rm Z}_{p^{e}}$ is
\begin{eqnarray}
\phi(x)=\left\{\begin{array}{cccc}
a x^{-1}+b  &x\in{\rm Z}^{\times}_{p^{e}}\\
b&x\in(p)
\end{array}\right.,
\end{eqnarray}
where $a,b\in {\rm Z}_{p^{e}}$. The initial value associated with (1) is given by $x_{0}\in{\rm Z}_{p^{e}}.$

Set $\phi^{0}(x)=x$ and $\phi^{i+1}=\phi\circ\phi^{i}$ for all $i=0,1,\ldots$. Starting from an initial value $x_{0}\in{\rm Z}_{p^{e}}$, the recurrence $x_{n+1}=\phi^{n}(x_{0})$ $(n=1,2\ldots)$ generates a sequence $x_{0},x_{1},\ldots$ over ${\rm Z}_{p^{e}}$. For every initial value $x_{0}\in{\rm Z}_{p^{e}}$, the smallest integer $L(x_{0},a,b)$ such that $x_{n+L(x_{0},a,b)}=x_{n}$ for all $n\geq n_{0}\geq0$ is called the period of the IPRNGs correspond to $x_{0}$, where $n_{0}$ is a nonnegative integer. Here, we denote $\phi^{r}({\rm Z}_{p^{e}})=\{\phi^{r}(x):x\in {\rm Z}_{p^{e}}\}$ and $\mid\phi^{r}({\rm Z}_{p^{e}})\mid$ be the cardinality of $\phi^{r}({\rm Z}_{p^{e}})$.

The full information on the period distribution is obtained by finding all possible $L(x_{0},a,b)$'s then count the number of a specific $L(x_{0},a,b)$ when $a,b$ and $x_{0}$ traverse all possible elements in ${\rm Z}_{p^{e}}$, where $p>3$ is an odd prime and $e\geq2$ is a integer. The period distribution for $p=2$ and $p=3$ need special analysis.

\section{Period distribution of IPRNGs with $a\in(p)$ in ${\rm Z}_{p^{e}}$}
When $a\in(p)$, the number of IPRNGs is $p^{3e-1}$. It would be better if we have an impression on what the period distribution with $a\in(p)$ looks like. Fig. 1 is a plot of the period distribution of IPRNGs with $a\in(5)$ in ${\rm Z}_{5^3}$. It shows that all the periods are $1$. In the following, the period distribution rules will be worked out analytically.
\begin{figure}[!t]
 \noindent
 \centering\includegraphics[width=3.5in]{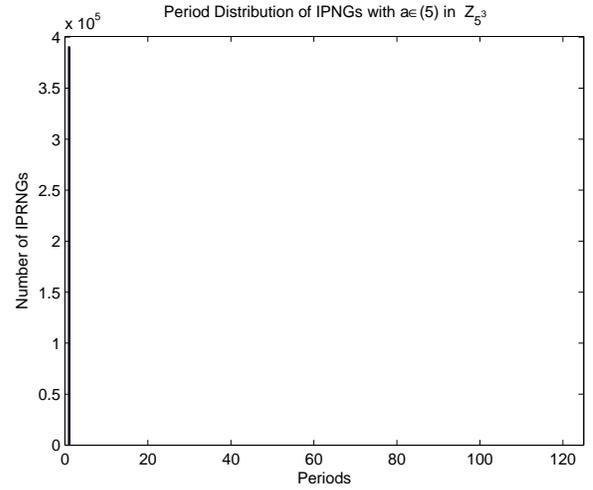}
 \caption{Period distribution of IPRNGs with $a\in(5)$ in ${\rm Z}_{5^{3}}$}\label{fig_sim}
\end{figure}

In the following, we will provide some lemmas, which are necessary to discuss the period distribution of the IPRNGs with $a\in(p)$.
\begin{lemma}
For IPRNG (1) with $a\in(p)$ and $a\neq0$. If $a=c_{k}p^{k}$ where $1\leq k\leq e$ and $c_{k}\in{\rm Z}^{\times}_{p^{e}}$, then $\phi(x)=\phi(x+p^{e-k})$ for all $x\in {\rm Z}_{p^{e}}$ and $x+p^{e-k}\in {\rm Z}_{p^{e}}$.
\end{lemma}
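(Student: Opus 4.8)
The plan is to exploit the $p$-adic valuation structure of ${\rm Z}_{p^{e}}$ and to reduce the claim to a single algebraic identity: adding $p^{e-k}$ to the argument perturbs the inverse only by a multiple of $p^{e-k}$, which is then annihilated by the factor $a$ of valuation $k$. First I would observe that since $a=c_{k}p^{k}\neq0$ with $c_{k}\in{\rm Z}^{\times}_{p^{e}}$, the relation $p^{e}=0$ forces $1\leq k\leq e-1$, so that $e-k\geq1$ and hence $p^{e-k}\in(p)$ with $p^{e-k}\neq0$. This is the observation that keeps the two arguments $x$ and $x+p^{e-k}$ on the same side of the case split in the definition of $\phi$.

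Next I would split on whether $x$ is a unit or a zero divisor, transferring the property to $x+p^{e-k}$ by reduction modulo $p$. Because $e-k\geq1$ we have $x+p^{e-k}\equiv x\ ({\rm mod}\ p)$, so $x\in(p)$ if and only if $x+p^{e-k}\in(p)$, and likewise $x\in{\rm Z}^{\times}_{p^{e}}$ if and only if $x+p^{e-k}\in{\rm Z}^{\times}_{p^{e}}$. In the zero-divisor case both sides equal $b$ by the second branch of (1), and the identity is immediate.

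The substantive case is $x\in{\rm Z}^{\times}_{p^{e}}$, where $x+p^{e-k}$ is then also a unit. Here I would combine the two inverses over a common denominator, which is legitimate because $x$ and $x+p^{e-k}$ are units and hence $x(x+p^{e-k})$ is invertible, obtaining
\[
x^{-1}-(x+p^{e-k})^{-1}=p^{e-k}[x(x+p^{e-k})]^{-1}.
\]
Multiplying through by $a=c_{k}p^{k}$ collapses the right-hand side to $c_{k}p^{e}[x(x+p^{e-k})]^{-1}=0$, since $p^{e}=0$ in ${\rm Z}_{p^{e}}$. Hence $ax^{-1}=a(x+p^{e-k})^{-1}$, and adding $b$ to both sides yields $\phi(x)=\phi(x+p^{e-k})$.

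I expect the only delicate point to be the bookkeeping of valuations rather than any genuine difficulty. One must ensure that the fractional manipulation of inverses is carried out inside the ring, so that every denominator appearing is truly a unit, and that the product $p^{k}\cdot p^{e-k}=p^{e}$ really vanishes, which is exactly where the hypothesis $a\in(p)$ (equivalently $k\geq1$) does its work. The edge case $k=e$, which would make $p^{e-k}=1$ a unit and break the case-preservation step, is excluded precisely by the standing assumption $a\neq0$.
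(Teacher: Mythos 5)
Your proof is correct and follows essentially the same route as the paper's: a case split on whether $x$ is a unit, with the unit case resolved by showing the difference of inverses is a multiple of $p^{e-k}$ and hence annihilated by $a=c_{k}p^{k}$. Your version is marginally tidier in that it justifies why $x$ and $x+p^{e-k}$ always fall in the same branch of the definition and notes explicitly that $a\neq0$ forces $k\leq e-1$, but the substance is identical.
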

\begin{proof}
The proof is divided into two cases.

Case I: $x\in(p)$ and $x+p^{e-k}\in(p)$. Then, we can get that $\phi(x)=\phi(x+p^{e-k})=b$.

Case II: $x\in{\rm Z}^{\times}_{p^{e}}$ and $x+p^{e-k}\in{\rm Z}^{\times}_{p^{e-k}}$. Then, $x^{-1}\in{\rm Z}^{\times}_{p^{e}}$ and $(x+p^{e-k})^{-1}\in{\rm Z}^{\times}_{p^{e-k}}$. From
$$
x+p^{e-k}\equiv x({\rm mod}p^{e-k}),
$$
we have
$$
(x+p^{e-k})^{-1}\equiv x^{-1}({\rm mod}p^{e-k}),
$$
which implies that
$$
p^{e-k}\mid ((x+p^{e-k})^{-1}-x^{-1}).
$$
Hence,
$$
p^{e}\mid c_{k}p^{k}((x+p^{e-k})^{-1}-x^{-1}),
$$
which means that
$$
c_{k}p^{k}(x+p^{e-k})^{-1}+b\equiv c_{k}p^{k}x^{-1}+b({\rm mod}p^{e}).
$$
Then,
$$
\phi(x+p^{e-k})=\phi(x).
$$

Combining Case I and Case II, we have proven this lemma. The proof is completed.
\end{proof}
\begin{lemma}
For IPRNG (1) with $a\in(p)$ and $a\neq0$, $\cdots\subseteq\phi^{r}({\rm Z}_{p^{e}})\subseteq\cdots\subseteq\phi^{1}({\rm Z}_{p^{e}})\subseteq\phi^{0}({\rm Z}_{p^{e}})$.
\end{lemma}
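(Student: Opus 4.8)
The plan is to recognize that the claimed tower is nothing more than the general fact that the iterated forward images of a self-map form a decreasing chain, and to supply the short induction that establishes it. The only structural input required is that $\phi$ is a genuine map from ${\rm Z}_{p^{e}}$ into itself, which is immediate from definition (1): for $x\in{\rm Z}^{\times}_{p^{e}}$ we have $ax^{-1}+b\in{\rm Z}_{p^{e}}$, and for $x\in(p)$ we have $b\in{\rm Z}_{p^{e}}$. Consequently $\phi^{0}({\rm Z}_{p^{e}})={\rm Z}_{p^{e}}$ is the whole ring, and every $\phi^{r}({\rm Z}_{p^{e}})$ is by construction a subset of it.

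First I would record the elementary monotonicity property of images under a function: if $A\subseteq B\subseteq{\rm Z}_{p^{e}}$, then $\phi(A)\subseteq\phi(B)$, since each element of $\phi(A)$ has the form $\phi(a)$ for some $a\in A\subseteq B$ and therefore lies in $\phi(B)$. This elementary observation is the engine of the whole argument and requires no arithmetic in ${\rm Z}_{p^{e}}$ beyond the fact that $\phi$ is single-valued.

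Next I would run the induction on $r$. The base case is $\phi^{1}({\rm Z}_{p^{e}})=\phi({\rm Z}_{p^{e}})\subseteq{\rm Z}_{p^{e}}=\phi^{0}({\rm Z}_{p^{e}})$, which is precisely the statement that $\phi$ maps the ring into itself. For the inductive step I would assume $\phi^{r}({\rm Z}_{p^{e}})\subseteq\phi^{r-1}({\rm Z}_{p^{e}})$; applying $\phi$ to both sides and invoking the monotonicity property then gives $\phi^{r+1}({\rm Z}_{p^{e}})=\phi(\phi^{r}({\rm Z}_{p^{e}}))\subseteq\phi(\phi^{r-1}({\rm Z}_{p^{e}}))=\phi^{r}({\rm Z}_{p^{e}})$. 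Chaining these inclusions over all $r$ yields the displayed decreasing tower.

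I do not expect any genuine obstacle here, since the statement is purely set-theoretic and does not in fact use the hypotheses $a\in(p)$ or $a\neq0$, which only fix the context of the section. The one point to handle cleanly is to make the induction transparent, so that it is visibly $\phi$ being a self-map — rather than any special feature of ${\rm Z}_{p^{e}}$ — that forces the nesting. The real significance of the lemma, namely that a decreasing tower of finite sets must eventually stabilize and thereby controls the transient length and the eventual period, will presumably be exploited in the subsequent lemmas rather than inside this proof.
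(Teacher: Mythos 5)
Your proof is correct and takes essentially the same route as the paper's: both argue by induction on $r$, with the inductive step being precisely the monotonicity of images under $\phi$ (the paper carries this out element-wise — writing an element of $\phi^{r+2}({\rm Z}_{p^{e}})$ as $\phi^{r+1}(x')$ via the inductive hypothesis — rather than isolating monotonicity as a separate observation as you do). Your remark that the hypotheses $a\in(p)$ and $a\neq 0$ play no role in this purely set-theoretic statement is also accurate.
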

\begin{proof}
We will prove this lemma by mathematical induction.

Basis: For $r=1$, it is obvious that $\phi^{1}({\rm Z}_{p^{e}})\subseteq\phi^{0}({\rm Z}_{p^{e}})$.

Inductive step: Assume $\phi^{r+1}({\rm Z}_{p^{e}})\subseteq\phi^{r}({\rm Z}_{p^{e}})$ holds for $r\geq1$. Then for any $\phi^{r+2}(x)\in\phi^{r+2}({\rm Z}_{p^{e}})$, we can get that $\phi^{r+1}(x)\in\phi^{r+1}({\rm Z}_{p^{e}})$. Since $\phi^{r+1}({\rm Z}_{p^{e}})\subseteq\phi^{r}({\rm Z}_{p^{e}})$, there exists a $\phi^{r}(x')\in\phi^{r}({\rm Z}_{p^{e}})$, such that $\phi^{r+1}(x)=\phi^{r}(x')$, thus $\phi^{r+2}(x)=\phi^{r+1}(x')\in\phi^{r+1}({\rm Z}_{p^{e}})$. This means that $\phi^{r+2}({\rm Z}_{p^{e}})\subseteq\phi^{r+1}({\rm Z}_{p^{e}})$.

Since both the basis and the inductive step have been proved, it has now been proved by mathematical induction that $\cdots\subseteq\phi^{r}({\rm Z}_{p^{e}})\subseteq\cdots\subseteq\phi^{1}({\rm Z}_{p^{e}})\subseteq\phi^{0}({\rm Z}_{p^{e}})$. The proof is completed.
\end{proof}
\begin{lemma}
For IPRNGs (1) with $a\in(p)$ and $a\neq0$, there exists an integer $r_{0}>0$, such that $|\phi^{r}({\rm Z}_{p^{e}})|<|\phi^{r-1}({\rm Z}_{p^{e}})|$ for all $1\leq r<r_{0}$ and $|\phi^{r}({\rm Z}_{p^{e}})|=1$ for all $r\geq r_{0}$.
\end{lemma}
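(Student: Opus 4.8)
The plan is to combine the monotonicity already furnished by Lemma 2 with a contraction argument measured in the $p$-adic valuation. Write $v(\cdot)$ for the $p$-adic valuation on ${\rm Z}_{p^{e}}$, so that $v(\alpha)$ is the largest integer $j$ with $\alpha\in(p^{j})$, the zero element being the only one with $v=e$. By Lemma 2 the sequence $|\phi^{r}({\rm Z}_{p^{e}})|$ is non-increasing, and since each image is a nonempty subset of the finite set ${\rm Z}_{p^{e}}$, it is a non-increasing sequence of positive integers. It therefore suffices to establish two facts: (i) once two consecutive cardinalities agree the sequence is constant thereafter, and (ii) that eventual constant value is $1$.

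For (i), suppose $|\phi^{r}({\rm Z}_{p^{e}})|=|\phi^{r-1}({\rm Z}_{p^{e}})|$ for some $r\geq1$. By Lemma 2, $\phi^{r}({\rm Z}_{p^{e}})\subseteq\phi^{r-1}({\rm Z}_{p^{e}})$, and equal cardinalities between finite sets force equality, so $S:=\phi^{r-1}({\rm Z}_{p^{e}})$ satisfies $\phi(S)=S$; hence $\phi|_{S}$ is a surjection of a finite set onto itself, i.e. a bijection, and $|\phi^{s}({\rm Z}_{p^{e}})|=|S|$ for every $s\geq r-1$. Granting (ii), define $r_{0}$ to be the least index with $|\phi^{r_{0}}({\rm Z}_{p^{e}})|=1$. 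For $r\geq r_{0}$ the image is a single point whose $\phi$-image is again a point, so $|\phi^{r}({\rm Z}_{p^{e}})|=1$. For $1\leq r<r_{0}$, were $|\phi^{r}({\rm Z}_{p^{e}})|=|\phi^{r-1}({\rm Z}_{p^{e}})|$ to hold, then by the preceding sentence the cardinality would remain constant and exceed $1$ forever, contradicting $|\phi^{r_{0}}({\rm Z}_{p^{e}})|=1$; hence the decrease is strict on this range, exactly as claimed.

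The crux is (ii), and this is where the hypothesis $a\in(p)$ bites. I first confine the image to a single coset: for a unit $x$ one has $\phi(x)=a x^{-1}+b=b+p^{k}(c_{k}x^{-1})$ with $c_{k}x^{-1}$ a unit, while for $x\in(p)$ one has $\phi(x)=b$; hence $\phi({\rm Z}_{p^{e}})\subseteq b+(p^{k})$ (a refinement of the identification in Lemma 1). I then split on $b$. If $b\in(p)$, then $b+(p^{k})\subseteq(p)$, so every element of $\phi({\rm Z}_{p^{e}})$ is a non-unit and a further application of $\phi$ sends them all to $b$, giving $\phi^{2}({\rm Z}_{p^{e}})=\{b\}$. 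If $b\in{\rm Z}^{\times}_{p^{e}}$, then every element of $b+(p^{k})$ is a unit, so each iterate $\phi^{r}({\rm Z}_{p^{e}})$ with $r\geq1$ consists of units lying in $b+(p^{k})$. The key computation is that, for units $y_{1},y_{2}$,
$$
\phi(y_{1})-\phi(y_{2})=a\,(y_{1}^{-1}-y_{2}^{-1})=-a\,\frac{y_{1}-y_{2}}{y_{1}y_{2}},
$$
whence $v(\phi(y_{1})-\phi(y_{2}))=v(y_{1}-y_{2})+k$, using $v(a)=k$ and $v(y_{1})=v(y_{2})=0$. Thus every application of $\phi$ raises the valuation of each pairwise difference in the image by exactly $k\geq1$; since a valuation reaching $e$ forces the difference to be $0$, after at most $\lceil e/k\rceil$ steps all pairwise differences vanish and the image is a single point. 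This yields (ii) in both cases.

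I expect the valuation-contraction identity $v(\phi(y_{1})-\phi(y_{2}))=v(y_{1}-y_{2})+k$ to be the heart of the matter, together with the preliminary observation that the iterates remain within $b+(p^{k})$ (all units when $b$ is a unit, all non-units when $b\in(p)$). The plateau argument in (i) is routine finite-set bookkeeping, and the monotonicity is handed to us by Lemma 2; it is controlling the dynamics inside the unit coset — showing the successive images genuinely shrink rather than merely fail to grow — that requires the valuation estimate and constitutes the main obstacle.
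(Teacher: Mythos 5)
Your proof is correct, and it takes a genuinely different route from the paper's. The paper likewise starts from Lemma 2 and the stabilization of a non-increasing sequence of positive integers, but to show the stable value is $1$ it argues by contradiction on the eventual invariant set: it explicitly treats only the case $|\phi^{r}({\rm Z}_{p^{e}})|=2$, splitting into the subcases where $\phi$ swaps or fixes the two points and deriving contradictions from the divisibility $p^{k}\mid a$, and then asserts that larger stable sets ``can be considered similarly.'' Your argument replaces this case analysis with a quantitative contraction estimate: after one step the image lies in the coset $b+(p^{k})$ (all non-units if $b\in(p)$, forcing collapse to $\{b\}$ in two steps; all units if $b\in{\rm Z}^{\times}_{p^{e}}$), and in the unit case the identity $\phi(y_{1})-\phi(y_{2})=-a(y_{1}-y_{2})(y_{1}y_{2})^{-1}$ shows each iteration raises the $p$-adic valuation of every pairwise difference in the image by $k$, so the image is a single point after at most $\lceil e/k\rceil$ steps. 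This buys you a complete proof uniform in the cardinality of the stable set, together with an explicit bound on $r_{0}$, whereas the paper's argument is shorter per case but leaves the general case as a sketch. One cosmetic caveat: in ${\rm Z}_{p^{e}}$ the identity should read $v(\phi(y_{1})-\phi(y_{2}))=\min(v(y_{1}-y_{2})+k,\,e)$ because the valuation saturates at $e$, but this only strengthens your conclusion and does not affect the argument.
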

\begin{proof}
It follows from lemma 2 that $\cdots\leq|\phi^{r}({\rm Z}_{p^{e}})|\leq\cdots\leq|\phi^{1}({\rm Z}_{p^{e}})|\leq|\phi^{0}({\rm Z}_{p^{e}})|$. Since $0<|\phi^{r}({\rm Z}_{p^{e}})|<+\infty$ for all $r\geq0$, there exists an integer $r_{0}>0$, such that $|\phi^{r}({\rm Z}_{p^{e}})|<|\phi^{r-1}({\rm Z}_{p^{e}})|$ for all $1\leq r<r_{0}$ and $|\phi^{r}({\rm Z}_{p^{e}})|$ are equal for all $r\geq r_{0}$.

In the following, we will prove that $|\phi^{r}({\rm Z}_{p^{e}})|=1$ for all $r\geq r_{0}$. Here, we only consider the case that $|\phi^{r}({\rm Z}_{p^{e}})|=2$ for all $r\geq r_{0}$, then the case for $|\phi^{r}({\rm Z}_{p^{e}})|>2$ can be considered similarly.

For $|\phi^{r}({\rm Z}_{p^{e}})|=2$, we assume that there exists a $r'\geq r_{0}$, such that $|\phi^{r'}({\rm Z}_{p^{e}})|=\{x_{1},x_{2}\}$ and $|\phi^{r'+1}({\rm Z}_{p^{e}})|=\{x_{1},x_{2}\}$ with $p^{e-k}\nmid(x_{1}-x_{2})$. If either $x_{1}\in(p)$ or $x_{2}\in(p)$, then contradictions will be easily derived. For $x_{1}\in{\rm Z}^{\times}_{p^{e}}$ and $x_{2}\in{\rm Z}^{\times}_{p^{e}}$, there are two cases.

Case I: $\phi(x_{1})=x_{2}$ and $\phi(x_{2})=x_{1}$. Then, we can get that $\phi(\phi(x_{1}))=\phi(x_{2})$. Thus, there exists an integer $n\neq0$, such that $\phi(x_{1})=x_{2}+n p^{e-k}$. Then, we have $x_{2}=x_{2}+n p^{e-k}$. This is a contradiction.

Case II: $\phi(x_{1})=x_{1}$ and $\phi(x_{2})=x_{2}$. There are two subcases.

Subcase i: $p^{k}\nmid(x_{1}-x_{2})$. It follows from $\phi(x_{1})=ax^{-1}_{1}+b$ and $\phi(x_{2})=ax^{-1}_{1}+b$ that
\begin{eqnarray}
a(x^{-1}_{1}-x^{-1}_{2})=x_{1}-x_{2}.
\end{eqnarray}
For $a=c_{k}p^{k}$, $p^{k}\mid(x_{1}-x_{2})$. This contradicts to $p^{k}\nmid(x_{1}-x_{2})$.

Subcase ii: $p^{k}\mid(x_{1}-x_{2})$. In this case, we assume that $x_{1}-x_{2}=c'_{k}p^{m}$, where $m\geq k$ and $c'_{k}\in{\rm Z}^{\times}_{p^{e}}$. Thus,
$$
x_{1}\equiv x_{2} ({\rm mod }p^{m}).
$$
Then
$$
x^{-1}_{1}\equiv x^{-1}_{2} ({\rm mod }p^{m}).
$$
From (2), we can get that
$$
c_{k}p^{k}(x^{-1}_{1}-x^{-1}_{2})=c'_{k}p^{m},
$$
which means that
\begin{eqnarray}
(x^{-1}_{1}-x^{-1}_{2})=c^{-1}_{k}c'_{k}p^{m-k}.
\end{eqnarray}
From (3), we have
$$
x^{-1}_{1}\nequiv x^{-1}_{2}({\rm mod }p^{m}),
$$

From Subcase i and Subcase ii, we have $\phi(x_{1})=x_{1}$ and $\phi(x_{2})=x_{2}$ lead to a contradiction.

Lemma 3 has been proven by combining Case I and Case II. The proof is completed.
\end{proof}

Now, we are ready to establish our main theorem for period distribution of IPRNGs with $a\in(p)$ on the basis of Lemma 3.
\begin{theorem}
For IPRNGs with $a\in(p)$, the possible periods and the number of each special period are given in Table I.
\end{theorem}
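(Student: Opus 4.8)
The plan is to leverage Lemma 3 to show that every orbit of $\phi$ collapses onto a single fixed point, from which the conclusion that all periods equal $1$ follows at once. First, for $a\in(p)$ with $a\neq0$, Lemma 3 supplies an integer $r_{0}>0$ with $|\phi^{r}({\rm Z}_{p^{e}})|=1$ for all $r\geq r_{0}$; I would write $\phi^{r_{0}}({\rm Z}_{p^{e}})=\{x^{*}\}$ for the unique surviving element. The first key step is to verify that $x^{*}$ is a fixed point: since $\phi^{r_{0}+1}({\rm Z}_{p^{e}})=\phi(\{x^{*}\})=\{\phi(x^{*})\}$, and by Lemma 2 this set is contained in $\phi^{r_{0}}({\rm Z}_{p^{e}})=\{x^{*}\}$, we are forced to have $\phi(x^{*})=x^{*}$.

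Next I would translate the collapse of the image into a statement about the period of each individual orbit. For any initial value $x_{0}\in{\rm Z}_{p^{e}}$ and any $n\geq r_{0}$, the iterate $\phi^{n}(x_{0})$ lies in $\phi^{n}({\rm Z}_{p^{e}})=\{x^{*}\}$, so the sequence $x_{0},x_{1},x_{2},\ldots$ is eventually the constant string $x^{*},x^{*},\ldots$. Taking $n_{0}=r_{0}$ in the definition of the period then gives $x_{n+1}=x_{n}=x^{*}$ for all $n\geq n_{0}$, whence $L(x_{0},a,b)=1$; this holds uniformly in $x_{0}$ and $b$. The degenerate case $a=0$ must be disposed of separately, but it is immediate: both branches of (1) reduce to $\phi(x)=b$ for every $x$, so after one step every orbit sits at $b$, and since $\phi(b)=b$ in either branch, $b$ is a fixed point and again $L(x_{0},0,b)=1$. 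Combining the two cases, the only period that occurs when $a\in(p)$ is $1$.

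Finally I would carry out the count recorded in Table I. The ideal $(p)$ contains $p^{e-1}$ elements, while $b$ and $x_{0}$ each range over all $p^{e}$ elements of ${\rm Z}_{p^{e}}$, so there are $p^{e-1}\cdot p^{e}\cdot p^{e}=p^{3e-1}$ triples $(a,b,x_{0})$ with $a\in(p)$, every one of which yields period $1$, in agreement with the tally $p^{3e-1}$ stated at the start of the section and with Fig.~1.

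I expect the only genuinely delicate point to lie upstream, in Lemma 3 itself, whose contradiction analysis already does the heavy lifting. Granting that lemma, the present theorem is essentially a bookkeeping argument; the two places that warrant care are confirming the fixed-point property of $x^{*}$ (so that no longer cycle can survive in the eventual image) and recording that the constant tail forces the \emph{minimal} period to be exactly $1$ rather than vacuous.
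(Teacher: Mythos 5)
Your proposal is correct and follows essentially the same route as the paper: both arguments rest entirely on Lemma~3 to force the eventual image to collapse to a single point (hence every orbit is eventually constant and has period $1$), handle $a=0$ separately as a trivial case, and then count the $p^{e-1}\cdot p^{e}\cdot p^{e}=p^{3e-1}$ triples. Your version merely spells out the details the paper leaves implicit, namely that the singleton $\phi^{r_0}({\rm Z}_{p^e})=\{x^*\}$ consists of a fixed point (via Lemma~2) and that an eventually constant tail gives minimal period exactly $1$.
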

\begin{proof}
Period analysis.

If $a=0$, then it is obvious that $ L(x_{0},a,b)=1$.

If $a\neq0$, then by Lemma 3, we can get that there exits an integer $r_{0}$ such that $\phi^{r+1}(x_{0})=\phi^{r}(x_{0})$ for all $r>r_{0}$. Thus, $L(x_{0};a,b)=1$.

Counting.

When $a$ traverses all elements in $(p)$,  $b$ and $x_{0}$ traverse all elements in ${\rm Z}_{p^{e}}$, respectively, there are $p^{3e-1}$ IPRNGs with $L(x_{0};a,b)=1$. The proof is completed.
\end{proof}
\begin{table}[!t]
\renewcommand{\arraystretch}{1.3}
\caption{Period distribution of IPRNGs with $a\in(p)$ in ${\rm Z}_{p^{e}}$ }
\label{table_example}
\centering
\begin{tabular}{|c|c|}
\hline
\bfseries Period & \bfseries Number of IPRNGs\\
\hline
 \tabincell{c}{$1$} & $p^{3e-1}$\\
\hline
\end{tabular}
\end{table}
\section{Period distribution of IPRNGs with $a\in{\rm Z}^{\times}_{p^{e}}$ and $b\in(p)$ in ${\rm Z}_{p^{e}}$}
When $a\in{\rm Z}^{\times}_{p^{e}}$ and $b\in(p)$, the number of IPRNGs is $(p-1)p^{3e-2}$. It would be better if we have an impression on what the period distribution with $a\in{\rm Z}^{\times}_{p^{e}}$ and $b\in(p)$ looks like. Fig. 2 is a plot of the period distribution of IPRNGs with $a\in{\rm Z}^{\times}_{5^{3}}$ and $b\in(5)$ in ${\rm Z}_{5^3}$. It can be seen from Fig. 2 that the periods distribute very sparsely, some exist and some do not. In the following, the period distribution rules for $a\in{\rm Z}^{\times}_{p^{e}}$ and $b\in(p)$ will be worked out analytically.
\begin{figure}[!t]
 \noindent
 \centering\includegraphics[width=3.5in]{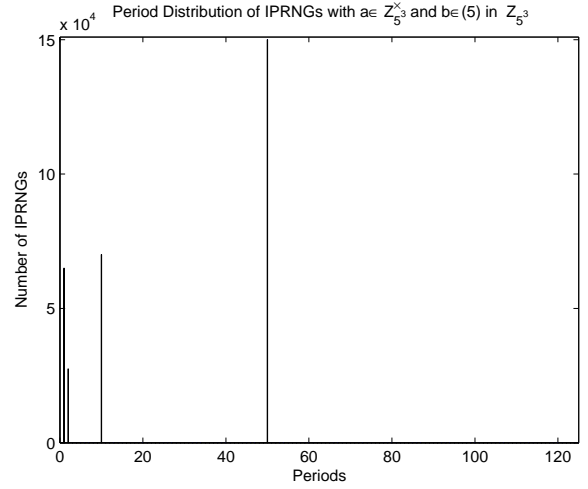}
 \caption{Period distribution of IPRNGs with $a\in {\rm Z}^{\times}_{5^{3}}$ and $b\in(5)$ in ${\rm Z}_{5^{3}}$}\label{fig_sim}
\end{figure}

We rewrite IPRNGs (1) as
\begin{eqnarray}
x_{n+1}=\left\{\begin{array}{cccc}
a x_{n}^{-1}+b &x_{n}\in{\rm Z}^{\times}_{p^{e}}\\
b&x_{n}\in(p)
\end{array}\right.,
\end{eqnarray}
for all $n\geq 0$, where $a\in{\rm Z}^{\times}_{p^{e}}$ and $b\in {\rm Z}_{p^{e}}$.

Hereafter, for presentation convenience, we denote $S(x_{0};a,b)$ as the sequence generated by (4) from initial value $x_{0}$.

In order to get the main results in the rest of this paper, we provide an important lemma which transforms the IPRNGs to 2-dimensional linear feedback shift registers (LFSRs). This lemma is an extensive version of Lemma 1 in \cite{c9}.
\begin{lemma}
Let $a,b,x_{0}\in{\rm Z}_{p^e}$. Define the LFSR
\begin{eqnarray}
y_{n+2}=by_{n+1}+ay_{n},
\end{eqnarray}
for all $n\geq0$, where $y_{0}=1$, $y_{1}=x_{0}$. Then if $m\geq0$ is an integer such that $y_{n}\in {\rm Z}^{\times}_{p^{e}}$ for all $0\leq n\leq m$, then $x_{n}=y_{n+1}y^{-1}_{n}$ for all $0\leq n\leq m$. Moreover, $m$ is the smallest positive integer satisfying $x_{m}\in(p)$ if and only if $m+1$ is the smallest integer satisfying $y_{m+1}\in(p)$.
\end{lemma}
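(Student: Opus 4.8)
The plan is to exploit the classical linearization of an inversive (M\"{o}bius-type) recurrence: the substitution $x_n=y_{n+1}y_n^{-1}$ turns the nonlinear map $x\mapsto ax^{-1}+b$ into the linear two-term recurrence (5). Indeed, if one postulates $x_n=y_{n+1}y_n^{-1}$ and $x_{n+1}=y_{n+2}y_{n+1}^{-1}$ with $x_n$ a unit, then $ax_n^{-1}+b=ay_ny_{n+1}^{-1}+b$, and multiplying the desired identity $y_{n+2}y_{n+1}^{-1}=ay_ny_{n+1}^{-1}+b$ through by the unit $y_{n+1}$ recovers exactly $y_{n+2}=by_{n+1}+ay_n$. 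This reverse-engineering of the coefficients $a,b$ and of the initial data $y_0=1,\ y_1=x_0$ is what motivates the definition of the LFSR, and the whole proof amounts to running this equivalence forward while carefully tracking invertibility.

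First I would prove the identity $x_n=y_{n+1}y_n^{-1}$ for $0\le n\le m$ by induction on $n$. The base case is immediate, since $y_0=1$ and $y_1=x_0$ give $x_0=y_1y_0^{-1}$. For the inductive step, suppose $x_n=y_{n+1}y_n^{-1}$ and $n+1\le m$. Because $n\le m$ and $n+1\le m$, both $y_n$ and $y_{n+1}$ lie in ${\rm Z}^{\times}_{p^e}$, so $x_n=y_{n+1}y_n^{-1}$ is a product of units, hence itself a unit; this is the decisive point, as it guarantees that $\phi$ acts through its \emph{first} branch, giving $x_{n+1}=ax_n^{-1}+b=ay_ny_{n+1}^{-1}+b$. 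Substituting $b=by_{n+1}y_{n+1}^{-1}$ and invoking (5) in the form $ay_n+by_{n+1}=y_{n+2}$ yields $x_{n+1}=(ay_n+by_{n+1})y_{n+1}^{-1}=y_{n+2}y_{n+1}^{-1}$, which closes the induction.

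Next I would establish the biconditional by recasting it as a single statement about first hitting times of the maximal ideal $(p)$. The key structural fact is that ${\rm Z}_{p^e}$ is local: every element is either a unit or lies in $(p)$, and multiplication by a unit preserves both properties. Let $T_y$ denote the least index with $y_{T_y}\in(p)$ (if the orbit never enters $(p)$ then, by the same induction, neither does the $x$-orbit, and neither side of the equivalence holds). Since $y_0=1$ is a unit, $T_y\ge1$, so $T_y=m+1$ for some $m\ge0$ and $y_0,\dots,y_m\in{\rm Z}^{\times}_{p^e}$. Applying the identity on $0\le n\le m$, the ratios $x_n=y_{n+1}y_n^{-1}$ are units for $0\le n\le m-1$ (numerator and denominator both units), while $x_m=y_{m+1}y_m^{-1}\in(p)$ because $y_{m+1}\in(p)$ and $y_m^{-1}$ is a unit. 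Hence $m$ is the least index with $x_m\in(p)$, i.e. $T_x=T_y-1$; reading this equality in either direction gives precisely the claimed equivalence, and when $T_x\ge1$ (equivalently $T_y\ge2$, i.e. $x_0=y_1$ is a unit) the qualifier ``positive'' is automatic.

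The routine algebra here (one M\"{o}bius manipulation plus the unit/non-unit dichotomy) is not where the difficulty lies. The part demanding care is the bookkeeping of \emph{minimality} at the boundary index: one must verify that the first entry of the $x$-sequence into $(p)$ occurs exactly one step before the first entry of the $y$-sequence, and that no earlier index is overlooked in either direction. This hinges entirely on the anchoring fact $y_0=1\in{\rm Z}^{\times}_{p^e}$, which forces $T_y\ge1$, together with the observation that the identity $x_n=y_{n+1}y_n^{-1}$ is valid only while \emph{all} preceding $y$'s are units; the induction must therefore stop exactly at the index where invertibility first fails, and it is this synchronized stopping that produces the index shift $m\leftrightarrow m+1$ in the statement.
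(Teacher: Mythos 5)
Your proof is correct and follows essentially the same route as the paper: induction on $n$ to establish $x_n=y_{n+1}y_n^{-1}$ via the recurrence, then reading off the first-hitting-time shift for the ideal $(p)$. You are in fact more careful than the paper on two points it leaves implicit — that $x_n$ is a unit so $\phi$ takes its first branch, and the minimality bookkeeping in the second assertion, which the paper dispatches in one sentence.
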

\begin{proof}
We will prove this lemma via mathematical induction.

Basis: For $n=1$, it is obvious that $x_{0}=y_{1}y^{-1}_{0}$.

Inductive step: Assume $x_{k}=y_{k+1}y^{-1}_{k}$ hold for $n=k$, where $0\leq k\leq t-1$. By $y_{k+2}=by_{k+1}+ay_{k}$, we can get that
\begin{eqnarray}
y_{k+2}y^{-1}_{k+1}&=&ay_{k}y^{-1}_{k+1}+b\nonumber\\
&=&a(y_{k+1}y^{-1}_{k})^{-1}+b,\nonumber
\end{eqnarray}
which means that
$$
x_{k+1}=ax^{-1}_{k}+b
$$
Since both the basis and the inductive step have been proved, it has now been proved by mathematical induction that $x_{n}=y_{n+1}y^{-1}_{n}$ for all $0\leq n\leq m$.

By the first assertion of this lemma, we can prove the second assertion. The proof is completed.
\end{proof}

For $a\in{\rm Z}^{\times}_{p^{e}}$ and $b\in(p)$, we provide the a useful lemma, which can be found in \cite{n10}.
\begin{lemma}
IPRNG (4) is a permutation of ${\rm Z}^{\times}_{p^{e}}$ if and only if $a\in{\rm Z}^{\times}_{p^{e}}$ and $b\in(p)$.
\end{lemma}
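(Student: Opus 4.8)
The plan is to prove the two implications separately, exploiting throughout that $({\rm Z}_{p^{e}},+,\times)$ is a local ring whose unique maximal ideal is $(p)$. The one structural fact I will use repeatedly is that a unit plus an element of $(p)$ is again a unit: for $u\in{\rm Z}^{\times}_{p^{e}}$ and $\theta\in(p)$ we have $u+\theta=u(1+u^{-1}\theta)$ with $u^{-1}\theta\in(p)$, so $1+u^{-1}\theta\equiv1\pmod p$ is a unit, hence so is $u+\theta$.

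For the sufficiency ($a\in{\rm Z}^{\times}_{p^{e}}$ and $b\in(p)$ imply that $\phi$ is a permutation of ${\rm Z}^{\times}_{p^{e}}$) I would first check that $\phi$ maps ${\rm Z}^{\times}_{p^{e}}$ into itself: for $x\in{\rm Z}^{\times}_{p^{e}}$ the inverse $x^{-1}$ is a unit, so $ax^{-1}$ is a unit, and adding $b\in(p)$ keeps it a unit by the remark above. Injectivity on ${\rm Z}^{\times}_{p^{e}}$ is then immediate: $\phi(x_{1})=\phi(x_{2})$ gives $ax_{1}^{-1}=ax_{2}^{-1}$, and cancelling the unit $a$ yields $x_{1}^{-1}=x_{2}^{-1}$, hence $x_{1}=x_{2}$. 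Since ${\rm Z}^{\times}_{p^{e}}$ is finite, an injective self-map of it is automatically surjective, so $\phi$ is a permutation of ${\rm Z}^{\times}_{p^{e}}$.

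For the necessity I would argue by contraposition, splitting into the two ways the hypothesis can fail. If $a\in(p)$: when $a=0$ the map is the constant $b$ on ${\rm Z}^{\times}_{p^{e}}$, which is not a permutation because $|{\rm Z}^{\times}_{p^{e}}|>1$; when $a=c_{k}p^{k}\neq0$, Lemma 1 gives $\phi(x)=\phi(x+p^{e-k})$, so I will exhibit a unit $x$ with $x+p^{e-k}$ also a unit and $x\neq x+p^{e-k}$, contradicting injectivity. If instead $a\in{\rm Z}^{\times}_{p^{e}}$ but $b\in{\rm Z}^{\times}_{p^{e}}$, I would produce a unit sent outside the units: solving $ax^{-1}+b=0$ gives $x=-b^{-1}a\in{\rm Z}^{\times}_{p^{e}}$ with $\phi(x)=0\in(p)$, so $\phi$ does not even map ${\rm Z}^{\times}_{p^{e}}$ into itself and cannot be a permutation of it.

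The main obstacle will be the bookkeeping in the subcase $a=c_{k}p^{k}\in(p)$ with $a\neq0$. Here $a\neq0$ forces $1\leq k\leq e-1$, so $1\leq e-k\leq e-1$ and $p^{e-k}\in(p)$ is nonzero. I must confirm that a unit $x$ with $x+p^{e-k}$ also a unit exists and is distinct from $x$: distinctness holds since $p^{e-k}\neq0$, and $x+p^{e-k}$ is a unit because $p\mid p^{e-k}$ makes $x+p^{e-k}\equiv x\pmod p$, so the two are units simultaneously. Lemma 1 then witnesses $\phi(x)=\phi(x+p^{e-k})$ with $x\neq x+p^{e-k}$, a genuine failure of injectivity. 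Everything else reduces to routine manipulation in the local ring.
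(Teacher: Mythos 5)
Your proof is correct. Note first that the paper does not actually prove this statement: it is quoted as a known result and attributed to Niederreiter and Shparlinski (reference \cite{n10}), so there is no in-paper argument to compare against. Your self-contained proof is sound and uses exactly the right structural fact, namely that ${\rm Z}_{p^{e}}$ is local with maximal ideal $(p)$, so a unit plus an element of $(p)$ is a unit. The sufficiency direction (stability of ${\rm Z}^{\times}_{p^{e}}$ under $x\mapsto ax^{-1}+b$, injectivity by cancelling the unit $a$, surjectivity by finiteness) is complete. The necessity direction correctly negates the conjunction into the two cases $a\in(p)$ and ($a,b$ both units): in the first you handle $a=0$ separately and for $a=c_{k}p^{k}\neq 0$ you correctly observe that $1\leq k\leq e-1$ forces $p^{e-k}$ to be a nonzero element of $(p)$, so $x$ and $x+p^{e-k}$ are distinct and simultaneously units while Lemma 1 identifies their images, killing injectivity; in the second, the explicit preimage $x=-b^{-1}a$ of $0$ shows the map does not even send units to units. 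The one thing worth making explicit if you write this up is that ``permutation of ${\rm Z}^{\times}_{p^{e}}$'' is being read as ``the restriction of $\phi$ to ${\rm Z}^{\times}_{p^{e}}$ is a bijection onto ${\rm Z}^{\times}_{p^{e}}$,'' so that failure of stability alone already disqualifies the map; with that reading fixed, every step of your argument goes through.
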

\begin{remark}
It follows from Lemma 5 that if $a\in{\rm Z}^{\times}_{p^{e}}$ and $b\in(p)$, then $S(x_{0};a,b)$ does not contain any element in $(p)$ for all $x_{0}\in {\rm Z}^{\times}_{p^{e}}$. This situation is quite different from the case $a\in{\rm Z}^{\times}_{p^{e}}$ and $b\in{\rm Z}^{\times}_{p^{e}}$, in which $S(x_{0};a,b)$ may contain elements in $(p)$ for some $x_{0}\in {\rm Z}^{\times}_{p^{e}}$, which will be proved later in Section V. This is the reason why we consider such two cases separately.
\end{remark}

By lemma 5, if $x_{0}\in (p)$, then $x_{n}=b$ for all $n\geq1$. Thus, $L(x_{0};a,b)=1$. In this case, there are $(p-1)p^{e-1}$, $p^{e-1}$ and $p^{e-1}$ choices of $a$, $b$ and $x_{0}$, respectively. Therefore, there are $(p-1)p^{3e-3}$ IPRNGs of period $1$ for this case. In the following, we will analyze period distribution of IPRNGs for the case that $x_{0}\in {\rm Z}^{\times}_{p^{e}}$.

Denote $f(t)=t^{2}-bt-a$ be the characteristic polynomial of recurrent relation (5). Let $\alpha,\beta$ be two roots of $f(t)$, i.e., $f(t)=(t-\alpha)(t-\beta)$. It can be seen that each pair of $a,b$ is uniquely determined by a pair of $\alpha,\beta$. It should be pointed out that $\alpha-\beta$ is always a unit. Actually, it follows from $b\in(p)$ that $p\mid \alpha+\beta$. If $p\mid\alpha-\beta$, then it can be obtained that $p\mid \alpha$. Since $a=\alpha\beta$, it holds that $p\mid a$. This contradicts to $a\in{\rm Z}^{\times}_{p^{e}}$. Then, we can get the general terms of LFSR (5):
\begin{eqnarray}
y_{n}=(\alpha-\beta)^{-1}((x_{0}-\beta)\alpha^{n}+(\alpha-x_{0})\beta^{n}),
\end{eqnarray}
for all $n\geq0$.

By Lemma 4 and (6), we have the following lemma.
\begin{lemma}
If $m\geq0$ is an integer such that $y_{n}\in {\rm Z}^{\times}_{p^{e}}$ for all $0\leq n\leq m$, then $x_{n}=x_{0}$ if and only if
\begin{eqnarray}
(x_{0}-\alpha)(x_{0}-\beta)\alpha^{n}=(x_{0}-\alpha)(x_{0}-\beta)\beta^{n}.\nonumber
\end{eqnarray}
\end{lemma}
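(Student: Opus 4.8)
The plan is to combine Lemma 4 with the closed form (6) and reduce the fixed-point condition $x_{n}=x_{0}$ to a single identity in $\alpha$, $\beta$, and $x_{0}$. Under the hypothesis $y_{n}\in{\rm Z}^{\times}_{p^{e}}$, Lemma 4 gives $x_{n}=y_{n+1}y^{-1}_{n}$, so that, because $y_{n}$ is a unit, the equation $x_{n}=x_{0}$ is equivalent to $y_{n+1}=x_{0}y_{n}$, i.e. to $y_{n+1}-x_{0}y_{n}=0$. This converts a statement about the nonlinear IPRNG into a purely linear condition on the LFSR sequence, which is where the formula (6) can be brought to bear.

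Next I would substitute (6) for both $y_{n+1}$ and $y_{n}$ into $y_{n+1}-x_{0}y_{n}$ and collect the coefficients of $\alpha^{n}$ and of $\beta^{n}$ separately. The $\alpha^{n}$-coefficient factors as $(x_{0}-\beta)(\alpha-x_{0})$ and the $\beta^{n}$-coefficient as $(\alpha-x_{0})(\beta-x_{0})$. Rewriting these via $(x_{0}-\beta)(\alpha-x_{0})=-(x_{0}-\alpha)(x_{0}-\beta)$ and $(\alpha-x_{0})(\beta-x_{0})=(x_{0}-\alpha)(x_{0}-\beta)$ makes the common factor $(x_{0}-\alpha)(x_{0}-\beta)$ explicit, yielding
\[
y_{n+1}-x_{0}y_{n}=-(\alpha-\beta)^{-1}(x_{0}-\alpha)(x_{0}-\beta)(\alpha^{n}-\beta^{n}).
\]

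Finally I would invoke the fact, established just before (6), that $\alpha-\beta$ is always a unit; hence $(\alpha-\beta)^{-1}$ is a unit and can be cancelled without ambiguity. Therefore $y_{n+1}-x_{0}y_{n}=0$ holds if and only if $(x_{0}-\alpha)(x_{0}-\beta)(\alpha^{n}-\beta^{n})=0$, which is exactly the asserted identity $(x_{0}-\alpha)(x_{0}-\beta)\alpha^{n}=(x_{0}-\alpha)(x_{0}-\beta)\beta^{n}$, and this settles the equivalence in both directions at once.

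The computation itself is routine; the one point requiring care is that ${\rm Z}_{p^{e}}$ has zero divisors, so cancellation is legitimate only for the \emph{unit} factor $(\alpha-\beta)^{-1}$ and not for the possibly non-unit factors $(x_{0}-\alpha)$, $(x_{0}-\beta)$, or $(\alpha^{n}-\beta^{n})$. Keeping all of these inside the displayed identity, rather than dividing them out, is precisely what makes the conclusion valid over the Galois ring, and I expect this to be the main subtlety of the argument.
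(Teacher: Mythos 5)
Your proposal is correct and is exactly the argument the paper intends: the lemma is stated there without proof as an immediate consequence of Lemma 4 and the closed form (6), and your computation (reducing $x_{n}=x_{0}$ to $y_{n+1}-x_{0}y_{n}=0$ via the unit $y_{n}$, substituting (6), and cancelling only the unit factor $(\alpha-\beta)^{-1}$) fills in precisely that omitted verification. Your closing remark about not cancelling the possibly non-unit factors $(x_{0}-\alpha)$, $(x_{0}-\beta)$, $(\alpha^{n}-\beta^{n})$ is the right point of care in ${\rm Z}_{p^{e}}$.
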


On the basis of the above discussions, the period distribution of IPRNGs is analyzed in the following two cases: A. $f(t)$ is reducible in ${\rm Z}_{p^{e}}[t]$; B. $f(t)$ is irreducible in ${\rm Z}_{p^{e}}[t]$ but reducible in its extension ring $ {\rm Z}_{p^{e}}[t]/(f(t))$.

\subsection{$f(t)$ Is Reducible in ${\rm Z}_{p^{e}}[t]$}
In this case, $\alpha,\beta$ are in $ {\rm Z}^{\times}_{p^{e}}$. Let $\alpha=\sum^{e-1}_{i=0}c_{i}p^{i}$, $\beta=\sum^{e-1}_{i=0}d_{i}p^{i}$ and $x_{0}=\sum^{e-1}_{i=0}h_{i}p^{i}$, where $c_{0},d_{0},h_{0}\in{\rm Z}^{\times}_{p}$ and $c_{i},d_{i},h_{i}\in{\rm Z}_{p}$ for all $i=1,2,\ldots,e-1$.

If either $x_{0}-\alpha$ or $x_{0}-\beta$ is a zero. By (6), we have $y_{n}=x_{0}^{n}$ for all $n\geq1$. Thus, $x_{n}=x_{0}$ for all $n\geq1$ , which means that $L(x_{0};a,b)=1$.

As $\alpha$ traverses all elements in ${\rm Z}^{\times}_{p^{e}}$, there are $(p-1)p^{e-1}$ choices of $\alpha$. Once $\alpha$ is chosen, there are $p^{e-1}$ $\beta$'s such that $p\mid\alpha+\beta$. Since each $f(t)$ is uniquely determined by a pair of $\alpha,\beta$, it holds that there are $\frac{(p-1)p^{2e-2}}{2}$ reducible $f(t)$'s in ${\rm Z}_{p^{e}}[t]$, which means that there are $\frac{(p-1)p^{2e-2}}{2}$ pairs of $a,b$. Once $\alpha,\beta$ are chosen, there are two choices of $x_{0}$. Thus, there are $(p-1)p^{2e-2}$ IPRNGs of period $1$.

There are two cases remained 1): both $x_{0}-\alpha$ and $x_{0}-\beta$ are units; 2): one of $x_{0}-\alpha$ and $x_{0}-\beta$ is a zero divisor.
\subsubsection{Both $x_{0}-\alpha$ and $x_{0}-\beta$ are units}
It follows Lemma 6 that $n={\rm ord}(\alpha\beta^{-1})$ is the smallest integer such that Lemma 6 holds. Thus, $L(x_{0};a,b)={\rm ord}(\alpha\beta^{-1})$.

It should be mentioned that $\alpha\beta^{-1}-\alpha^{-1}\beta$ is a zero divisor for this case. Indeed, $\alpha\beta^{-1}-\alpha^{-1}\beta=\alpha^{-1}\beta^{-1}(\alpha-\beta)(\alpha+\beta)$. Since $b\in(p)$ and $b=\alpha+\beta$, it must hold that $p\mid \alpha\beta^{-1}-\alpha^{-1}\beta$.

Now, we are ready to present our results on the period distribution of IPRNGs for this case.

\begin{proposition}
Suppose $f(t)$ is reducible in ${\rm Z}_{p^{e}}[t]$ and $\alpha-\beta$ is a unit. If both $x_{0}-\alpha$ and $x_{0}-\beta$ are units, then the number of IPRNGs of period $2$ is $\frac{(p-3)(p-1)p^{2e-2}}{2}$; the number of IPRNGs of period $2p^{e-k}$ is $\frac{(p-3)(p-1)^{2}p^{3e-k-3}}{2}$, where $1\leq k\leq e-1$,
\end{proposition}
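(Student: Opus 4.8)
The plan is to reduce everything to computing $\mathrm{ord}(\alpha\beta^{-1})$ as $\alpha,\beta$ range over the allowed pairs, since the preceding discussion already established that $L(x_0;a,b)=\mathrm{ord}(\alpha\beta^{-1})$ when both $x_0-\alpha$ and $x_0-\beta$ are units. So the period is determined entirely by $\gamma:=\alpha\beta^{-1}$, and the entire proposition becomes a counting problem: for each target period $P\in\{2,2p^{e-k}\}$, count the number of triples $(\alpha,\beta,x_0)$ with $\alpha-\beta,\ x_0-\alpha,\ x_0-\beta$ all units, $p\mid\alpha+\beta$, and $\mathrm{ord}(\alpha\beta^{-1})=P$, then divide by $2$ at the appropriate place to account for the fact that each unordered pair $\{\alpha,\beta\}$ (equivalently each polynomial $f$) is counted twice.

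First I would analyze the order structure of $\gamma=\alpha\beta^{-1}$. Using the group decomposition $\mathrm{R}^{\times}_{e,1}=\mathrm{Z}^{\times}_{p^e}\cong G_1\times G_2$ from the preliminaries, where $|G_1|=p-1$ is cyclic and $G_2=\{1+\theta:\theta\in(p)\}$ has order $p^{e-1}$, any unit has order of the form $d\cdot p^{j}$ with $d\mid p-1$ and $0\le j\le e-1$. The condition $b\in(p)$ forces $p\mid\alpha+\beta$, i.e. $\beta\equiv-\alpha\ (\mathrm{mod}\ p)$, so modulo $p$ we have $\gamma\equiv\alpha\beta^{-1}\equiv-1$. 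Hence the $G_1$-component of $\gamma$ has order exactly $2$ (here $p>3$ guarantees $-1\ne 1$ and $-1$ has order $2$), while the $G_2$-component contributes a factor $p^{j}$. Therefore $\mathrm{ord}(\gamma)=2p^{j}$ for some $0\le j\le e-1$, which explains why only periods $2$ and $2p^{e-k}$ (writing $j=e-k$, so $1\le k\le e-1$) can occur. This is the structural heart of the argument and the step I expect to be the main obstacle: one must pin down precisely which valuation $j=v_p(\gamma-\gamma_0)$ arises, where $\gamma_0$ is the Teichm\"uller lift of $-1$, and translate it into a counting condition on $\alpha,\beta$.

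Next I would carry out the counting for each case. For period $2$ (i.e. $j=0$, so $\gamma\in G_2\cdot(-1)$ with trivial $p$-part, meaning $\gamma$ equals the Teichm\"uller lift of $-1$ up to nothing extra, i.e. $\alpha+\beta\equiv 0$ to full precision in the relevant sense), I would count the ordered pairs $(\alpha,\beta)$ giving $\mathrm{ord}(\gamma)=2$ and multiply by the number of valid $x_0$. The constraint that $x_0-\alpha$ and $x_0-\beta$ both be units removes the two forbidden residues $x_0\equiv\alpha$ and $x_0\equiv\beta\ (\mathrm{mod}\ p)$; since $\alpha\not\equiv\beta\ (\mathrm{mod}\ p)$ always holds here (as $\alpha-\beta$ is a unit), this excludes exactly $2p^{e-1}$ of the $(p-1)p^{e-1}$ units, leaving $(p-3)p^{e-1}$ choices of $x_0$, which is the source of the $(p-3)$ factor. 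For period $2p^{e-k}$, I would count pairs $(\alpha,\beta)$ with $v_p$-valuation of the $G_2$-part equal to exactly $e-k$; this is where the extra $(p-1)p^{e-k-1}$-type factor appears, reflecting the number of elements of $G_2$ of order exactly $p^{e-k}$.

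Finally I would assemble the totals. Multiplying the count of valid unordered $\{\alpha,\beta\}$ pairs of the right order by the $(p-3)p^{e-1}$ admissible $x_0$'s, and comparing with the baseline count of $\frac{(p-1)p^{2e-2}}{2}$ reducible polynomials established earlier, should reproduce the claimed $\frac{(p-3)(p-1)p^{2e-2}}{2}$ for period $2$ and $\frac{(p-3)(p-1)^2 p^{3e-k-3}}{2}$ for period $2p^{e-k}$. As a consistency check I would verify that summing the period-$2$ count with the period-$2p^{e-k}$ counts over $1\le k\le e-1$, together with the period-$1$ contributions handled before the proposition, accounts for all triples $(\alpha,\beta,x_0)$ in this sub-case; this sanity check is the cleanest way to catch any off-by-a-factor-of-$p$ error in the valuation counting, which is where such arguments most often go wrong.
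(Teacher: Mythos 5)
Your proposal is correct and follows essentially the same route as the paper: reduce to $L(x_0;a,b)=\mathrm{ord}(\alpha\beta^{-1})$, observe that $\alpha\beta^{-1}\equiv-1\pmod p$ forces the order to be $2p^{j}$, obtain the $(p-3)p^{e-1}$ admissible $x_0$'s by excluding the residues of $0,\alpha,\beta$, and halve to pass from ordered pairs $(\alpha,\beta)$ to polynomials $f$. The only cosmetic difference is that you read off the possible orders from the $G_1\times G_2$ decomposition and count elements of $-G_2$ of each order, whereas the paper equivalently tracks the $p$-adic digits of $\alpha\beta^{-1}+1$ via the valuation of $b$; both yield the same counts.
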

\begin{proof}
Period analysis.

By previous discussion, we have $L(x_{0},a,b)={\rm ord}(\alpha\beta^{-1})$. Let $\alpha\beta^{-1}=\sum^{e-1}_{i=0}a_{i}p^{i}$ and $\alpha^{-1}\beta=\sum^{e-1}_{i=0}b_{i}p^{i}$, where $a_{0},b_{0}\in{\rm Z}^{\times}_{p}$, $a_{i},b_{i}\in{\rm Z}_{p}$ for all $i=1,2,\ldots,e-1$. Since $\alpha\beta^{-1}-\alpha^{-1}\beta$ is a zero divisor, it holds that $a_{0}=b_{0}$. On the other hand, as $\alpha\beta^{-1}=(\alpha^{-1}\beta)^{-1}$, it is valid that $a_{0}b_{0}=1$. Thus, $a_{0}=b_{0}=1$ or $a_{0}=b_{0}=p-1$. Since $b\in(p)$, it holds that $a^{-1}b^{2}+2\equiv p-2 ({\rm mod}p)$, thus, $a_{0}=b_{0}=p-1$.

If $b=0$, then $\alpha+\beta=0$. Thus, $\alpha\beta^{-1}+1=0$, which means that $a_{i}=0$ for all $i=1,2,\ldots,e-1$. Hence, ${\rm ord}(\alpha\beta^{-1})=2$.

If $1\leq k\leq e-1$ is the largest integer such that $p^{k}\mid b$, then $p^{k}\mid \alpha\beta^{-1}+1$. Hence, ${\rm ord}(\alpha\beta^{-1})=2p^{e-k}$.

Counting.

If $b=0$, then the choice of $b$ is unique. If $p^{k}\mid b$, then $a_{i}=0$ for all $i=1,2,\ldots,k-1$, $a_{k}\in{\rm Z}^{\times}_{p}$ and $a_{i}\in{\rm Z}_{p}$ for all $i=k+1,k+2,\ldots,e-1$, there are $(p-1)p^{e-k-1}$ choices of $b$. Once $b$ is chosen, there are $\frac{(p-1)p^{e-1}}{2}$ choices of $a$.

It follows form both $x_{0}-\alpha$ and $x_{0}-\beta$ are units that there are $p-3$ choices of $h_{0}$ and $p$ choices of $h_{i}$ for each $i=1,2,\ldots,e-1$. Thus, for each pair of $a,b$, there are $(p-3)p^{e-1}$ choices of $x_{0}$. Therefore, the number of IPRNGs of period $2$ is $\frac{(p-3)(p-1)p^{2e-2}}{2}$. The number of IPRNGs of period $2p^{e-k}$ is $\frac{(p-3)(p-1)^{2}p^{3e-k-3}}{2}$, where $1\leq k\leq e-1$. The proof is completed.
\end{proof}
\subsubsection{One of $x_{0}-\alpha$ and $x_{0}-\beta$ is a zero divisor}
In this case, $x_{0}-\alpha$ and $x_{0}-\beta$ can not both be zero divisors. Without loss of generality, we suppose $x_{0}-\alpha$ is a zero divisor and $x_{0}-\beta$ is not. Let $1\leq k\leq e-1$ be the largest integer such that $p^{k}\mid x_{0}-\alpha$, then by Lemma 6, we have $p^{k}(\alpha\beta^{-1})^{n}=p^{k}$. Let $(\alpha\beta^{-1})^{n}=\sum^{e-1}_{i=0}g_{i}p^{i}$, where $g_{i}\in {\rm Z}_{p}$, then we have
$$
p^{k}(g_{0}+g_{1}p^{1}+\ldots+g_{e-s-1}p^{e-k-1})=p^{k},
$$
which means that
\begin{eqnarray}
g_{0}+g_{1}p+\ldots+g_{e-k-1}p^{e-k-1}=1.
\end{eqnarray}

Define $\eta^{e}_{k}$ be a reduction map from ${\rm Z}_{p^{e}}$ to ${\rm Z}_{p^{e-k}}$, then we have $n={\rm ord}(\eta^{e}_{k}(\alpha\beta^{-1}))$ is the smallest integer such that (7) holds, which means that $L(x_{0};a,b)={\rm ord}(\eta^{e}_{k}(\alpha\beta^{-1}))$.

Now, we are ready to present our results on the period distribution of the IPRNGs in this case.
\begin{proposition}
Suppose $f(t)$ is reducible in ${\rm Z}_{p^{e}}[t]$ and $\alpha-\beta$ is a unit. If one of $x_{0}-\alpha$ and $x_{0}-\beta$ is a zero divisor, then the number of IPRNGs of period $2$ is $((e-1)p-e+1)(p-1)p^{2e-2}$; the number of IPRNGs of period $2p^{e-k-s}$ is $(p-1)^{3}p^{3e-k-s-3}$, where $1\leq k\leq e-1$ and $1\leq s\leq e-k-1$.
\end{proposition}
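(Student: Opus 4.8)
The plan is to split the statement into a \emph{period analysis} and a \emph{counting} part, reusing the identity $L(x_{0};a,b)={\rm ord}(\eta^{e}_{k}(\alpha\beta^{-1}))$ derived just above from Lemma 6 and (7), where $1\le k\le e-1$ is the largest integer with $p^{k}\mid x_{0}-\alpha$ and $\eta^{e}_{k}$ reduces modulo $p^{e-k}$. Throughout I would write $v_{p}(\cdot)$ for the $p$-adic valuation on ${\rm Z}_{p^{e}}$, so that $k=v_{p}(x_{0}-\alpha)$.

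For the period, I would first record the arithmetic of $\alpha\beta^{-1}$ forced by $b\in(p)$. Since $p\mid\alpha+\beta$ we have $\alpha\beta^{-1}\equiv-1\pmod p$, and more precisely
$$
\alpha\beta^{-1}=-1+(\alpha+\beta)\beta^{-1}=-1+b\beta^{-1},
$$
so that, setting $s=v_{p}(b)$, the principal unit $-\alpha\beta^{-1}=1-b\beta^{-1}$ satisfies $v_{p}(-\alpha\beta^{-1}-1)=s$ (as $\beta$ is a unit). Two regimes then appear inside ${\rm Z}^{\times}_{p^{e-k}}$. If $s\ge e-k$ then $\eta^{e}_{k}(\alpha\beta^{-1})=-1$, whose order is $2$ (here $e-k\ge1$ and $p$ is odd, so $-1\ne1$), giving $L=2$; the extreme case $b=0$ falls here. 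If instead $1\le s\le e-k-1$, I would factor $\alpha\beta^{-1}=(-1)\cdot(-\alpha\beta^{-1})$ and invoke the standard order formula for principal units in ${\rm Z}^{\times}_{p^{e-k}}$ ($p$ odd): an element $1+cp^{s}$ with $p\nmid c$ has order $p^{(e-k)-s}$. Since ${\rm ord}(-1)=2$ and ${\rm ord}(-\alpha\beta^{-1})=p^{e-k-s}$ are coprime in the abelian group ${\rm Z}^{\times}_{p^{e-k}}$, their product has order $2p^{e-k-s}$, i.e.\ $L=2p^{e-k-s}$. This produces exactly the two period values claimed.

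For the counting I would set up a correspondence between the IPRNGs of this sub-case and ordered triples $(\alpha,\beta,x_{0})$ with $\alpha,\beta\in{\rm Z}^{\times}_{p^{e}}$, $p\mid\alpha+\beta$, $v_{p}(x_{0}-\alpha)=k\ge1$, and $x_{0}-\beta$ a unit. Each admissible configuration is counted exactly once: although $(a,b)$ corresponds only to the \emph{unordered} pair $\{\alpha,\beta\}$, the requirement that $x_{0}-\alpha$ be the zero divisor (which, by $\alpha-\beta\in{\rm Z}^{\times}_{p^{e}}$, cannot hold simultaneously for $x_{0}-\beta$) fixes which root is labelled $\alpha$, removing the factor-of-two ambiguity. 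The crucial simplification is that the three choices decouple: $\alpha$ ranges over all $(p-1)p^{e-1}$ units; choosing $\beta$ amounts to choosing $t=\alpha+\beta$ of prescribed valuation, with $\beta=t-\alpha\equiv-\alpha\pmod p$ automatically a unit; and choosing $x_{0}$ amounts to choosing $w=x_{0}-\alpha$ with $v_{p}(w)=k$, after which $x_{0}-\beta=(\alpha-\beta)+w$ is automatically a unit since $\alpha-\beta$ is a unit and $w\in(p)$.

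With this decoupling the two counts are routine. For fixed $(k,s)$ with $1\le s\le e-k-1$ (period $2p^{e-k-s}$) there are $(p-1)p^{e-1}$ choices of $\alpha$, $(p-1)p^{e-1-s}$ choices of $t$ with $v_{p}(t)=s$, and $(p-1)p^{e-1-k}$ choices of $w$ with $v_{p}(w)=k$, giving $(p-1)^{3}p^{3e-k-s-3}$. For period $2$ I would instead fix $k$ and sum over all $b$ with $v_{p}(b)\ge e-k$, i.e.\ over the $p^{k}$ values $t\in p^{e-k}{\rm Z}_{p^{e}}$ (this absorbs $b=0$); the product $(p-1)p^{e-1}\cdot p^{k}\cdot(p-1)p^{e-1-k}=(p-1)^{2}p^{2e-2}$ is independent of $k$, so summing over $k=1,\dots,e-1$ gives $(e-1)(p-1)^{2}p^{2e-2}=((e-1)p-e+1)(p-1)p^{2e-2}$, as stated. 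The main obstacle I anticipate is not any single computation but the bookkeeping: arguing cleanly that the zero-divisor labelling eliminates the unordered-pair double count and that the constraint ``$x_{0}-\beta$ is a unit'' is vacuous, so that the counts genuinely factor as products.
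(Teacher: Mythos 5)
Your proposal is correct and follows essentially the same route as the paper: the period comes from $L(x_{0};a,b)={\rm ord}(\eta^{e}_{k}(\alpha\beta^{-1}))$ with $v_{p}(\alpha\beta^{-1}+1)=v_{p}(b)=s$ splitting into the regimes $s\geq e-k$ (order $2$) and $1\leq s\leq e-k-1$ (order $2p^{e-k-s}$), and the count factors over the independent choices of root, of $b$ by valuation, and of $x_{0}-\alpha$ by valuation. Your ordered-root bookkeeping, which absorbs $b=0$ into the single condition $v_{p}(b)\geq e-k$ and excludes $x_{0}=\alpha,\beta$ automatically, is in fact tidier than the paper's: the paper lists $2p^{e-1}$ choices of $x_{0}$ in its $b=0$ term where $2(p^{e-1}-1)$ is needed for its displayed sum to equal the stated total $((e-1)p-e+1)(p-1)p^{2e-2}$, which your computation reproduces directly.
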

\begin{proof}
Period analysis.

Let $\alpha\beta^{-1}=\sum^{e-1}_{i=0}a_{i}p^{i}$, $a_{i}\in {\rm Z}_{p}$ for all $i=0,1,\ldots,e-1$.

If $b=0$, then $\alpha+\beta=0$, we have $\alpha\beta^{-1}+1=0$. Thus, $a_{0}=p-1$, $a_{i}=0$ for all $i=1,2,\ldots,e-1$. Then ${\rm ord}(\eta^{e}_{k}(\alpha\beta^{-1}))=2$.

If $e-k\leq s\leq e-1 $ is the largest integer such that $p^{s}\mid b$, then $p^{s}\mid\alpha\beta^{-1}+1$, thus, $\eta^{e}_{k}(\alpha\beta^{-1})=a_{0}$, which means that ${\rm ord}(\eta^{e}_{k}(\alpha\beta^{-1}))=2$.

If $1\leq s\leq e-k-1 $ is the largest integer such that $p^{s}\mid b$, then $p^{s}\mid\alpha\beta^{-1}+1$, thus, $\eta^{e}_{k}(\alpha\beta^{-1})=a_{0}+\sum^{e-k-1}_{i=s}a_{i}p^{i}$, where $a_{s}\in {\rm Z}^{\times}_{p}$ and $a_{i}\in {\rm Z}_{p}$ for all $i=s+1,s+2,\ldots,e-k-1$. Then ${\rm ord}(\eta^{e}_{k}(\alpha\beta^{-1}))=2p^{e-k-s}$.

Counting.

If $L(x_{0};a,b)=2$, then either $b=0$ or $p^{s}\mid b$, where $e-k\leq s\leq e-1$.

As $b=0$, there are $\frac{(p-1)p^{e-1}}{2}$ choices of $a$ and  $2p^{e-1}$ choices of $x_{0}$.

As $p^{s}\mid b$, there are $(p-1)p^{e-s-1}$ choices of $b$. Once $b$ is chosen, there are $\frac{(p-1)p^{e-1}}{2}$ choices of $a$. Since $p^{k}\mid x_{0}-\alpha$ or $p^{k}\mid x_{0}-\beta$, there are $2(p-1)p^{e-k-1}$ choices of $x_{0}$ altogether.

Thus, the number of IPRNGs of period $2$ is
\begin{eqnarray}
&&(p-1)p^{2e-2}+\sum^{e-1}_{k=1}\sum^{e-1}_{s=e-k}(p-1)^{3}p^{3e-k-s-3}\nonumber\\
&&=((e-1)p-e+1)(p-1)p^{2e-2}\nonumber
\end{eqnarray}

If $L(x_{0};a,b)=2p^{e-k-s}$, then $p^{s}\mid b$, where $1\leq s\leq e-k-1$.

As $p^{s}\mid b$, there are $(p-1)p^{e-s-1}$ choices of $b$. Once $b$ is chosen, there are $\frac{(p-1)p^{e-1}}{2}$ choices of $a$. Since $p^{k}\mid x_{0}-\alpha$ or $p^{k}\mid x_{0}-\beta$, there are $2(p-1)p^{e-k-1}$ choices of $x_{0}$.

Thus, the number of IPRNGs of period $2p^{e-k-s}$ is $(p-1)^{3}p^{3e-k-s-3}$, where $1\leq k\leq e-1$ and $1\leq s\leq e-k-1$. The proof is completed.
\end{proof}
\subsection{$f(t)$ Is Irreducible in ${\rm Z}_{p^{e}}[t]$}
In this case, $f(t)$ must be reducible in ${\rm Z}_{p^{e}}[t]/(f(t))$. Since $p\nmid\alpha-\beta$, it is valid that $t-\alpha$ and $t-\beta$ are coprime in ${\rm Z}_{p}$. Thus, by the Hensel's lemma in \cite{w1}, we can get that $f(t)$ is a basic irreducible polynomial in ${\rm Z}_{p}$. Therefore, ${\rm Z}_{p^{e}}[t]/(f(t))$ is a Galois ring which is isomorphic with ${\rm R}_{e,2}$. When $a$ traverses all elements in ${\rm Z}^{\times}_{p^{e}}$ and $b$ traverses all elements in $(p)$, there are $(p-1)p^{2e-2}$ $f(t)$'s in ${\rm Z}_{p^{e}}[t]$. In case A, we obtain that there are $\frac{(p-1)p^{2e-2}}{2}$ $f(t)$'s which are reducible in ${\rm Z}_{p^{e}}[t]$. Thus, there are $\frac{(p-1)p^{2e-2}}{2}$ $f(t)$'s which are irreducible in ${\rm Z}_{p^{e}}[t]$, which means that there are $\frac{(p-1)p^{2e-2}}{2}$ pairs of $a,b$ such that $f(t)$ is irreducible in ${\rm Z}_{p^{e}}[t]$.

Since $\alpha,\beta\in {\rm R}_{e,2}$ but $\alpha,\beta\notin {\rm Z}_{p^{e}}$, it is valid that both $x_{0}-\alpha$ and $x_{0}-\beta$ are units for all $x_{0}\in{\rm Z}^{\times}_{p^{e}}$. Then, it follows from Lemma 6 that $L(x_{0};a,b)={\rm ord}(\alpha\beta^{-1})$.

We present the following proposition without proof because the proof is the same as Proposition 1.
\begin{proposition}
Suppose $f(t)$ is irreducible in ${\rm Z}_{p^{e}}[t]$. Then the number of IPRNGs of $2$ is $\frac{(p-1)^{2}p^{2e-2}}{2}$. The number of IPRNGs of period $2p^{e-k}$ is $\frac{(p-1)^{3}p^{3e-k-3}}{2}$, where $1\leq k\leq e-1$,.
\end{proposition}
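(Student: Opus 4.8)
The plan is to follow the template of Proposition 1 almost verbatim, since the only structural change is that the irreducibility of $f(t)$ removes the restriction on $x_{0}$. Because $\alpha,\beta\in{\rm R}_{e,2}\setminus{\rm Z}_{p^{e}}$, their images in the residue field ${\rm GF}(p^{2})$ lie outside ${\rm GF}(p)$, so for every $x_{0}\in{\rm Z}^{\times}_{p^{e}}$ the elements $x_{0}-\alpha$ and $x_{0}-\beta$ reduce to nonzero quantities and are therefore units; this is exactly the fact recorded before the proposition, along with its consequence $L(x_{0};a,b)={\rm ord}(\alpha\beta^{-1})$. Thus the case split that produced the factor $p-3$ in Proposition 1 disappears, and every one of the $(p-1)p^{e-1}$ units $x_{0}$ contributes the same period for a fixed admissible pair $a,b$.

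For the period analysis I would work intrinsically rather than through a coordinatewise $p$-adic expansion. Starting from $\alpha\beta^{-1}+1=(\alpha+\beta)\beta^{-1}=b\beta^{-1}$ and using that $\beta$ is a unit, the $p$-adic valuation of $\alpha\beta^{-1}+1$ equals that of $b$; in particular $\alpha\beta^{-1}\equiv-1\pmod{(p)}$. I would then read off ${\rm ord}(\alpha\beta^{-1})$ from the decomposition ${\rm R}^{\times}_{e,2}=G_{1}\times G_{2}$: the component of $\alpha\beta^{-1}$ in $G_{1}$ is the Teichm\"{u}ller element $-1$ of order $2$, while its component in $G_{2}$ is $1+\theta$ with $\theta\in(p^{k})\setminus(p^{k+1})$ when $p^{k}$ is the exact power of $p$ dividing $b$. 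The standard estimate that the deviation of $(1+\theta)^{p^{j}}$ from $1$ has valuation exactly $k+j$ --- valid because $p>3$ forces $\binom{p}{2}$ to have valuation $1$, so the quadratic binomial term has valuation $2k+1>k+1$ --- shows this $G_{2}$-component has order $p^{e-k}$. As $\gcd(2,p^{e-k})=1$, it follows that ${\rm ord}(\alpha\beta^{-1})=2p^{e-k}$, while the degenerate case $b=0$ gives $\alpha\beta^{-1}=-1$ of order $2$. I expect this transfer of the order computation from ${\rm Z}_{p^{e}}$ to the extension ring ${\rm R}_{e,2}$ to be the only genuine point requiring care, although it is routine once the group structure of ${\rm R}^{\times}_{e,2}$ recalled in the preliminaries is invoked.

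The counting is then pure bookkeeping, with quadratic residues replacing the initial-value restriction. For period $2$ I would count the pairs with $b=0$: here $f(t)=t^{2}-a$, which by Hensel's lemma is irreducible exactly when $\bar{a}$ is a quadratic non-residue modulo $p$, yielding $\tfrac{p-1}{2}\cdot p^{e-1}$ admissible values of $a$; multiplying by the $(p-1)p^{e-1}$ choices of $x_{0}$ gives $\tfrac{(p-1)^{2}p^{2e-2}}{2}$. For period $2p^{e-k}$ I would count the $(p-1)p^{e-k-1}$ elements $b$ of exact valuation $k$; for each such $b$ one has $f(t)\equiv t^{2}-\bar{a}\pmod{p}$, again irreducible iff $\bar{a}$ is a non-residue, so $\tfrac{(p-1)p^{e-1}}{2}$ values of $a$, and multiplying by $(p-1)p^{e-1}$ initial values gives $\tfrac{(p-1)^{3}p^{3e-k-3}}{2}$. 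As a consistency check I would verify that adding the period-$2$ count to $\sum_{k=1}^{e-1}\tfrac{(p-1)^{3}p^{3e-k-3}}{2}$ recovers $\tfrac{(p-1)^{2}p^{3e-3}}{2}$, which is precisely the $\tfrac{(p-1)p^{2e-2}}{2}$ irreducible pairs times the $(p-1)p^{e-1}$ admissible initial values, confirming that no IPRNG of this case has been missed or double-counted.
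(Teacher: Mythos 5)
Your proposal is correct and follows essentially the same route as the paper: reduce to $L(x_{0};a,b)={\rm ord}(\alpha\beta^{-1})$ using that $x_{0}-\alpha$ and $x_{0}-\beta$ are automatically units, read the order off the $p$-adic valuation of $b$ via $\alpha\beta^{-1}+1=b\beta^{-1}$, and count $b$, $a$, $x_{0}$ independently. You differ only in supplying justifications the paper asserts without detail --- the exact order $2p^{e-k}$ via the $G_{1}\times G_{2}$ decomposition of ${\rm R}^{\times}_{e,2}$ and the binomial valuation estimate rather than digit expansions, and the count $\frac{(p-1)p^{e-1}}{2}$ of admissible $a$ via quadratic non-residues and Hensel's lemma --- plus a consistency check; these tighten rather than change the argument.
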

\begin{proof}
Period analysis.

By previous discussion, we have $L(x_{0},a,b)={\rm ord}(\alpha\beta^{-1})$. Let $\alpha\beta^{-1}=\sum^{e-1}_{i=0}a_{i}p^{i}$ and $\alpha^{-1}\beta=\sum^{e-1}_{i=0}b_{i}p^{i}$, where $a_{i},b_{i}\in \Gamma_{e,2}$ for all $i=1,2,\ldots,e-1$. Since $b\in (p)$, it is valid that $p\mid \alpha+\beta$. Thus, $p\mid\alpha\beta^{-1}-\alpha^{-1}\beta$, which means that $a_{0}=b_{0}$. On the other hand, $(\alpha\beta^{-1})(\alpha^{-1}\beta)=1$, then $a_{0}b_{0}=1$. Thus, $a_{0}^{2}=1$ which means that ${\rm ord}(a_{0})=2$.

If $b=0$, then $\alpha+\beta=0$. Thus, $\alpha\beta^{-1}+1=0$, which means that ${\rm ord}(\alpha\beta^{-1})=2$.

If $1\leq k\leq e-1$ is the largest integer such that $p^{k}\mid b$, then $p^{k}\mid \alpha\beta^{-1}+1$. Hence, ${\rm ord}(\alpha\beta^{-1})=2p^{e-k}$.

Counting.

If $b=0$, then the choice of $b$ is unique. If $p^{k}\mid b$, then there are $(p-1)p^{e-k-1}$ choices of $b$. Once $b$ is chosen, there are $\frac{(p-1)p^{e-1}}{2}$ choices of $a$. Since both $x_{0}-\alpha$ and $x_{0}-\beta$ are units, there are $(p-1)p^{e-1}$ choices of $x_{0}$. Therefore, the number of IPRNGs of period $2$ is $\frac{(p-1)^{2}p^{2e-2}}{2}$. The number of IPRNGs of period $2p^{e-k}$ is $\frac{(p-1)^{3}p^{3e-k-3}}{2}$, where $1\leq k\leq e-1$. The proof is completed.
\end{proof}

Now, we have discussed all cases for the period distribution of IPRNGs with $a\in{\rm Z}^{\times}_{p^{e}}$ and $b\in(p)$. The overall results are summarized in the following theorem.
\begin{theorem}
For IPRNGs with $a\in{\rm Z}^{\times}_{p^{e}}$ and $b\in(p)$, the possible periods and the number of each special period are given in Table II.
\end{theorem}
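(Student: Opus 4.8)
The plan is to assemble Theorem 2 by merging the four disjoint families of parameter triples $(a,b,x_{0})$ that the preceding analysis has exhausted, recording the period and the count contributed by each family, and then checking that the counts sum to the total number $(p-1)p^{3e-2}$ of admissible triples (namely $(p-1)p^{e-1}$ units $a$, times $p^{e-1}$ choices of $b\in(p)$, times $p^{e}$ choices of $x_{0}$). First I would fix the decomposition of the parameter space. For $a\in{\rm Z}^{\times}_{p^{e}}$ and $b\in(p)$ the triples split according to (i) $x_{0}\in(p)$, and, for $x_{0}\in{\rm Z}^{\times}_{p^{e}}$, according to whether $f(t)=t^{2}-bt-a$ is reducible or irreducible in ${\rm Z}_{p^{e}}[t]$, with the reducible case subdivided by whether each of $x_{0}-\alpha$ and $x_{0}-\beta$ is a unit, a zero, or a zero divisor. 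Every block has already been handled: the case $x_{0}\in(p)$ gives period $1$ with count $(p-1)p^{3e-3}$ by Lemma 5; the degenerate root case ($x_{0}=\alpha$ or $x_{0}=\beta$) gives period $1$ with count $(p-1)p^{2e-2}$; and the three remaining blocks are exactly Propositions 1, 2 and 3.

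Next I would reorganize the counts by the actual numerical value of the period rather than by the block that produced them. The period $1$ collects the two contributions above, giving $(p-1)p^{3e-3}+(p-1)p^{2e-2}$. The period $2$ appears in all three propositions, so its total is the sum $\frac{(p-3)(p-1)p^{2e-2}}{2}+((e-1)p-e+1)(p-1)p^{2e-2}+\frac{(p-1)^{2}p^{2e-2}}{2}$ of the period-$2$ entries of Propositions 1, 2 and 3. The genuinely composite periods $2p^{m}$ with $1\le m\le e-1$ receive contributions from Proposition 1 and Proposition 3 (via the substitution $k=e-m$) together with a contribution from Proposition 2, whose periods $2p^{e-k-s}$ must be reindexed: for fixed $m$ I would set $s=e-k-m$ and sum over the admissible $k$ so that $e-k-s=m$, which forces $1\le m\le e-k-1$.

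The main obstacle is precisely this reindexing-and-summation bookkeeping, since the three sources use different parametrizations ($k$ alone in Propositions 1 and 3, the pair $(k,s)$ in Proposition 2) and only the combined period value is intrinsic to the IPRNG. I would therefore carry out the substitution $m=e-k-s$ in Proposition 2, collect the resulting geometric-type sums over $k$ into closed form, and add them to the Proposition 1 and Proposition 3 entries indexed by $k=e-m$. As a final and independent consistency check I would verify that the grand total of all the counts equals $(p-1)p^{3e-2}$; matching this total confirms that the partition is exhaustive and that no block has been double-counted, at which point the aggregated entries are exactly those displayed in Table II, completing the proof.
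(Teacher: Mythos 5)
Your proposal matches the paper's (implicit) proof of Theorem 2: the theorem is obtained precisely by aggregating the $x_{0}\in(p)$ case, the degenerate-root case, and Propositions 1--3 over the same disjoint decomposition of the parameter triples, and your period-$1$ and period-$2$ totals reproduce the Table II entries $(p-1)(p^{3e-3}+p^{2e-2})$ and $(ep-e-1)(p-1)p^{2e-2}$. The only cosmetic difference is that Table II keeps Proposition 2's periods $2p^{e-k-s}$ as a separate doubly-indexed row instead of reindexing them by $m=e-k-s$ and merging them into the $2p^{e-k}$ row as you suggest; that is presentation rather than substance.
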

\begin{table}[!t]
\renewcommand{\arraystretch}{1.3}
\caption{Period distribution of IPRNGs with $a\in{\rm Z}^{\times}_{p^{e}}$ and $b\in(p)$ in ${\rm Z}_{p^{e}}$}
\label{table_example}
\centering
\begin{tabular}{|c|c|}
\hline
\bfseries Periods & \bfseries Number of IPRNGs\\
\hline
 \tabincell{c}{$1$} & $(p-1)(p^{3e-3}+p^{2e-2})$\\
\hline
\tabincell{c}{$2$} & \tabincell{c}{$(ep-e-1)(p-1)p^{2e-2}$}\\
\hline
\tabincell{c}{$2p^{e-k}$ \\ for each $1\leq k\leq e-1$} & $(p-1)^{2}(p-2)p^{3e-k-3}$\\
\hline
\tabincell{c}{$2p^{e-k-s}$ \\ for each $1\leq k\leq e-1$ and $1\leq s\leq e-k-1$} & $(p-1)^{3}p^{3e-k-s-3}$\\
\hline
\end{tabular}
\end{table}
\begin{example}
The following example is given to compare the theoretical and experimental results. A computer program has been written to exhaust all possible IPRNGs with $a\in{\rm Z}^{\times}_{5^{3}}$ and $b\in(5)$ in ${\rm Z}_{5^{3}}$ to find the period by brute force, the results are shown in Fig. 2.

Table III lists the complete result we have obtained. It provides full information on the period distribution of the IPRNGs. The maximal period is $50$ while the minimal period is $1$. As it is shown in Fig. 2 and Table III, the theoretical and experimental results fit well. The analysis process also indicates how to choose the parameters and the initial values such that the IPRNGs fit specific periods.
\end{example}
\begin{table}[!t]
\renewcommand{\arraystretch}{1.3}
\caption{Period distribution of IPRNGs with $a\in{\rm Z}^{\times}_{5^{3}}$ and $b\in(5)$ in ${\rm Z}_{5^{3}}$}
\label{table_example}
\centering
\begin{tabular}{|c|c|c|c|c|}
\hline
Periods &1 &2 &10&50\\
\hline
Number of IPRNGs &65000&27500&70000&150000\\
\hline
\end{tabular}
\end{table}
\section{Period distribution of IPRNGs with $a\in{\rm Z}^{\times}_{p^{e}}$ and $b\in{\rm Z}^{\times}_{p^{e}}$ in ${\rm Z}_{p^{e}}$}
When $a\in{\rm Z}^{\times}_{p^{e}}$ and $b\in{\rm Z}^{\times}_{p^{e}}$, the number of IPRNGs is $(p-1)^{2}p^{3e-2}$. It would be better if we have an impression on what the period distribution with $a\in{\rm Z}^{\times}_{p^{e}}$ and $b\in(p)$ looks like. Fig. 3 is a plot of the period distribution of IPRNGs (1) with $a\in{\rm Z}^{\times}_{5^{3}}$ and $b\in{\rm Z}^{\times}_{5^{3}}$ in ${\rm Z}_{5^3}$.  It can be seen from Fig. 3 that the periods distribute very sparsely, some exist and some do not. In the following, the period distribution rules for $a\in{\rm Z}^{\times}_{p^{e}}$ and $b\in{\rm Z}^{\times}_{p^{e}}$ will be worked out analytically.
\begin{figure}[!t]
 \noindent
 \centering\includegraphics[width=3.5in]{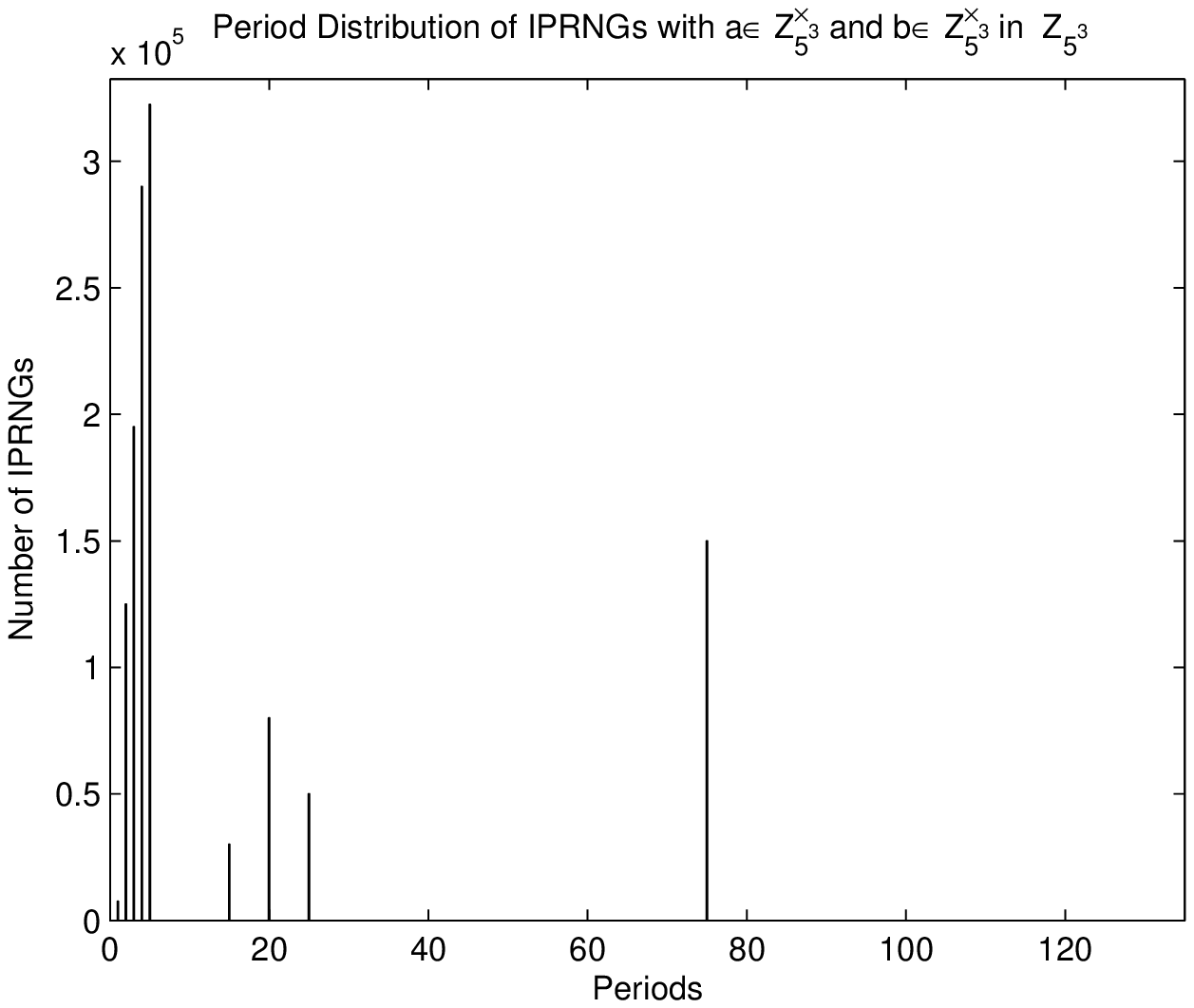}
 \caption{Period distribution of IPRNGs with $a\in {\rm Z}^{\times}_{5^{3}}$ and $b\in {\rm Z}^{\times}_{5^{3}}$ in ${\rm Z}_{5^{3}}$}\label{fig_sim}
\end{figure}

It follows from (6) that if $\alpha-\beta$ is a unit, then we are able to obtain the general term of LFSRs (5). Otherwise, we can not get that its general term. Thus, the period distribution of the IPRNGs is analyzed in the following two cases: A. $\alpha-\beta$ is a unit; B. $\alpha-\beta$ is a zero or a zero divisor, where $\alpha,\beta$ are roots of $f(t)$.

\subsection{$\alpha-\beta$ is a unit}
In this case, if $f(t)$ is reducible in ${\rm Z}_{p^{e}}[t]$, then $\alpha,\beta\in {\rm Z}^{\times}_{p^{e}}$. If $f(t)$ is irreducible in ${\rm Z}_{p^{e}}[t]$, then $f(t)$ must reducible in its extension ring ${\rm Z}_{p^{e}}[t]/(f(t))$. In the following, we will consider the two subcases, 1): $f(t)$ is reducible in ${\rm Z}_{p^{e}}[t]$; 2): $f(t)$ is irreducible in ${\rm Z}_{p^{e}}[t]$ but reducible in its extension ring ${\rm Z}_{p^{e}}[t]/(f(t))$. In both subcases, a pair of $a,b$ is uniquely determined by a pair of $\alpha,\beta$.
\subsubsection{$f(t)$ is reducible in ${\rm Z}_{p^{e}}[t]$}
Let $\alpha=\sum^{e-1}_{i=0}c_{i}p^{i}$, $\beta=\sum^{e-1}_{i=0}d_{i}p^{i}$ and $x_{0}=\sum^{e-1}_{i=0}h_{i}p^{i}$, where $c_{i},d_{i},h_{i}\in{\rm Z}_{p}$ then it follows from $\alpha-\beta$ is a unit that $c_{0}\neq d_{0}$. For presentation convenience, we denote $c_{0}=\omega_{1},d_{0}=\omega_{2}$ and $h_{0}=\pi$.

It follows from recurrence relation (5) that
$$
y_{n+2}\equiv (\omega_{1}+\omega_{2}) y_{n+1}-\omega_{1}\omega_{2}y_{n} ({\rm mod}p).
$$
Let $x'_{n}=x_{n} {\rm mod}p$ and $y'_{n}=y_{n} {\rm mod}p$ for all $n=1,2,\ldots$. Then, we obtain
\begin{eqnarray}
y'_{n+2}= (\omega_{1}+\omega_{2}) y'_{n+1}-\omega_{1}\omega_{2}y'_{n}.
\end{eqnarray}

Similar to (6), we have the general term of (8)
$$
y'_{n}=(\omega_{1}-\omega_{2})^{-1}((\pi-\omega_{2})\omega^{n}_{1}+(\omega_{1}-\pi)\omega^{n}_{2}).
$$

If both $\pi-\omega_{1}\neq 0$ and $\pi-\omega_{2}\neq 0$, then $y'_{n}=0$ if and only if
\begin{eqnarray}
(\omega_{1}\omega_{2}^{-1})^{n}=(\pi-\omega_{1})(\pi-\omega_{2})^{-1}.
\end{eqnarray}

For presentation convenience, we denote $\Omega=\{\omega_{1}\omega_{2}^{-1},(\omega_{1}\omega_{2}^{-1})^{2},\ldots,(\omega_{1}\omega_{2}^{-1})^{{\rm ord}(\omega_{1}\omega_{2}^{-1})-1}\}$.

If $(\pi-\omega_{1})(\pi-\omega_{2})^{-1}\in\Omega$, there exists $1\leq n\leq p-1$ such that (9) holds, thus, $S(x_{0};a,b)$ must contains some elements in $(p)$; if $(\pi-\omega_{1})(\pi-\omega_{2})^{-1}\notin\Omega$, there does not exist any $n$ such that (9) holds, thus, $S(x_{0};a,b)$ does not contain any element in $(p)$.

On the other hand, if either $\pi-\omega_{1}=0$ or $\pi-\omega_{2}=0$, then $y'_{n}\neq 0$ for all $n=1,2,\ldots$, which means that $S(x_{0};a,b)$ does not contain any element in $(p)$.

Now, we are ready to present our results on the period distribution of IPRNGs for this case.
\begin{proposition}
Suppose $f(t)$ is reducible in ${\rm Z}_{p^{e}}[t]$ and $\alpha-\beta$ is a unit. If $(\pi-\omega_{1})(\pi-\omega_{2})^{-1}\in\Omega$, then $L(x_{0};a,b)$ traverses the set $\{k-1:k>2,k\mid p-1\}$. For each $k$, there are $(k-1)(p-1)p^{2e-2}\sum^{e-1}_{i=0}\frac{\varphi(kp^{i})}{2}$ IPRNGs of period $k-1$.
\end{proposition}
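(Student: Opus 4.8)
The plan is to reduce the whole dynamics modulo $p$, where IPRNG (4) collapses to the inversive generator over the field ${\rm Z}_{p}$ with the two fixed points $\omega_{1},\omega_{2}$, and then to lift the resulting period back to ${\rm Z}_{p^{e}}$. First I would record that reducing (4) modulo $p$ gives $\bar{x}_{n+1}=\bar{a}\bar{x}_{n}^{-1}+\bar{b}$ with $\bar{a}=-\omega_{1}\omega_{2}$ and $\bar{b}=\omega_{1}+\omega_{2}$, whose fixed points are precisely $\omega_{1},\omega_{2}$ and whose orbits in the coordinate $(\bar{x}-\omega_{1})(\bar{x}-\omega_{2})^{-1}$ are scaled, at each step, by a power of $\omega_{1}\omega_{2}^{-1}$; set $k={\rm ord}(\omega_{1}\omega_{2}^{-1})$, so that $k\mid p-1$. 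The hypothesis $(\pi-\omega_{1})(\pi-\omega_{2})^{-1}\in\Omega$ is, via (9), exactly the assertion already established in the text that $S(x_{0};a,b)$ contains an element of $(p)$. As soon as some $x_{m}\in(p)$, the definition of $\phi$ forces $x_{m+1}=b$, so every such orbit is eventually absorbed into the orbit of $b$; hence $L(x_{0};a,b)$ equals the period of $b$, and it remains to compute the latter.

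For the period of $b$ I would invoke Lemma 4 with $y_{0}=1$, $y_{1}=b$. Because $b=\alpha+\beta$, the general term (6) telescopes to
$$
y_{n}=(\alpha-\beta)^{-1}(\alpha^{n+1}-\beta^{n+1}),
$$
so, $\alpha-\beta$ being a unit, $y_{n}\in(p)$ if and only if $(\omega_{1}\omega_{2}^{-1})^{n+1}=1$, i.e. if and only if $k\mid n+1$. Thus $y_{n}$ is a unit for $0\le n\le k-2$ and $y_{k-1}$ is the first term in $(p)$; by the second assertion of Lemma 4 this means $x_{k-2}$ is the first iterate lying in $(p)$, whence $x_{k-1}=b=x_{0}$. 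The same computation read modulo $p$ shows the reduced orbit of $\bar{b}$ first returns at step $k-1$, so $x_{j}\neq b$ for $0<j<k-1$; therefore the orbit of $b$ closes up exactly after $k-1$ steps and $L(x_{0};a,b)=k-1$. Finally $b\in{\rm Z}^{\times}_{p^{e}}$ forces $\omega_{1}+\omega_{2}\neq0$, i.e. $\omega_{1}\omega_{2}^{-1}\neq-1$, so $k\neq2$, while $\omega_{1}\neq\omega_{2}$ gives $k\neq1$; hence $k>2$ and $k\mid p-1$, which is the advertised range.

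For the counting I would fix $k>2$ with $k\mid p-1$ and enumerate the admissible triples. A pair $(a,b)$ corresponds to an unordered pair $\{\alpha,\beta\}$, which I parametrize by $\alpha\in{\rm Z}^{\times}_{p^{e}}$ together with $g=\alpha\beta^{-1}$; here $\alpha$ is free, giving $(p-1)p^{e-1}$ choices, and the requirement ${\rm ord}(\omega_{1}\omega_{2}^{-1})=k$ means exactly that the reduction of $g$ has order $k$ in ${\rm Z}^{\times}_{p}$. Since ${\rm Z}^{\times}_{p^{e}}$ is cyclic of order $p^{e-1}(p-1)$, the units $g$ whose reduction has order $k$ are those of order $kp^{i}$ for $0\le i\le e-1$, numbering $\sum_{i=0}^{e-1}\varphi(kp^{i})$; dividing by $2$ for the unordered pair yields $\frac{(p-1)p^{e-1}}{2}\sum_{i=0}^{e-1}\varphi(kp^{i})$ pairs $(a,b)$. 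For the initial value, the Möbius bijection $\pi\mapsto(\pi-\omega_{1})(\pi-\omega_{2})^{-1}$ shows that $(\pi-\omega_{1})(\pi-\omega_{2})^{-1}\in\Omega$ holds for exactly $k-1$ residues $\pi$, namely those sitting on the length-$(k-1)$ cycle through $0$ and $b$, while the higher $p$-adic digits $h_{1},\dots,h_{e-1}$ are free; hence $(k-1)p^{e-1}$ choices of $x_{0}$. Multiplying the two counts gives $(k-1)(p-1)p^{2e-2}\sum_{i=0}^{e-1}\frac{\varphi(kp^{i})}{2}$, as claimed.

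The main obstacle is the exact determination of the period over ${\rm Z}_{p^{e}}$, i.e. showing it stays equal to $k-1$ and does not lengthen with $e$. The decisive point is the telescoped form $y_{n}=(\alpha-\beta)^{-1}(\alpha^{n+1}-\beta^{n+1})$ for the orbit of $b$: it shows that the first return to $(p)$, and hence the jump $0\mapsto b$ that closes the cycle, is governed purely by the congruence $k\mid n+1$, so the cycle closes precisely at step $k-1$ with no $p$-adic lengthening. The remaining delicate bookkeeping is the passage from ``order of the reduction equals $k$'' to the count $\sum_{i=0}^{e-1}\varphi(kp^{i})$, which I would justify from the cyclic structure of ${\rm Z}^{\times}_{p^{e}}$ recalled in the Preliminaries, together with the verification that the constraints $\alpha-\beta\in{\rm Z}^{\times}_{p^{e}}$ and $b\in{\rm Z}^{\times}_{p^{e}}$ are automatic once $k>2$.
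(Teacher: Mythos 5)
Your proof is correct and takes essentially the same route as the paper's: identify $L(x_{0};a,b)$ with the period of the orbit of $b$, use Lemma 4 together with the general term (6) (your telescoped form $y_{n}=(\alpha-\beta)^{-1}(\alpha^{n+1}-\beta^{n+1})$ is just (6) specialized to $x_{0}=b$, playing the role of the paper's equation (9)) to get $L={\rm ord}(\omega_{1}\omega_{2}^{-1})-1$ with $k={\rm ord}(\omega_{1}\omega_{2}^{-1})>2$ dividing $p-1$, and then count $(k-1)p^{e-1}$ initial values and $(p-1)p^{e-1}\sum_{i=0}^{e-1}\frac{\varphi(kp^{i})}{2}$ pairs $(a,b)$. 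The only cosmetic difference is that you parametrize the pairs $(a,b)$ by $(\alpha,\alpha\beta^{-1})$ and divide by two for the unordered pair, whereas the paper parametrizes by $b$ and $\alpha\beta^{-1}+\alpha^{-1}\beta$; both give the same count.
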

\begin{proof}
Period analysis.

Since $b\in{\rm Z}_{p^e}$ and $b=\alpha+\beta$, it holds that $p\nmid \alpha+\beta$. Combining $p\nmid \alpha-\beta$, we have $p\nmid \alpha\beta^{-1}-\alpha^{-1}\beta$, which means that $\omega_{1}\omega^{-1}_{2}\neq \omega_{1}^{-1}\omega_{2}$. Again, since $(\omega_{1}\omega^{-1}_{2})(\omega_{1}^{-1}\omega_{2})=1$, it is valid that $\omega_{1}\omega^{-1}_{2}\neq 1,p-1$ which means that ${\rm ord}(\omega_{1}\omega^{-1}_{2})>2$.

If $(\pi-\omega_{1})(\pi-\omega_{2})^{-1}\in\Omega$, then $S(x_{0};a,b)$ contains some elements in $(p)$. Thus, $L(x_{0};a,b)=L(b;a,b)$. Then, we consider the case that $x_{0}=b$, which means that $\pi=\omega_{1}+\omega_{2}$. By (9), we have $y'_{n}=0$ if and only if $(\omega_{1}\omega_{2}^{-1})^{n+1}=1$. Thus, $n'={\rm ord}(\omega_{1}\omega^{-1}_{2})-1$ is the smallest integer such that $y'_{n'}=0$. By Lemma 4, we have $x'_{n'-1}=0$, thus, $x_{n'}=b$, which means that $L(x_{0};a,b)={\rm ord}(\omega_{1}\omega^{-1}_{2})-1$. Since $\omega_{1}\omega^{-1}_{2}\in {\rm Z}_{p}$, it holds that ${\rm ord}(\omega_{1}\omega^{-1}_{2})\mid p-1$. Hence, $L(x_{0};a,b)$ traverses the set $\{k-1:k>2,k\mid p-1\}$.

Counting.

For $L(x_{0};a,b)=k-1$, there are $k-1$ $\pi$'s such that $(\pi-\omega_{1})(\pi-\omega_{2})^{-1}\in\Omega$ and $p$ choices of $h_{i}$ for all $i=1,2,\ldots,e-1$. Thus, there are $(k-1)p^{e-1}$ choices of $x_{0}$.

Since $\alpha$ and $\beta$ are roots of $f(t)$, it can be verified that $\alpha\beta^{-1}$ and $\alpha^{-1}\beta$ are roots of $g(t)=t^{2}+(a^{-1}b^{2}+2)t+1$. Therefore, $a^{-1}b^{2}+2=\alpha\beta^{-1}+\alpha^{-1}\beta$. Thus, $a=b^{2}(\alpha\beta^{-1}+\alpha^{-1}\beta-2)$. Since $f(t)$ is reducible in ${\rm Z}_{p^{e}}[t]$, it is valid that ${\rm ord}(\alpha\beta^{-1})=kp^{i}$, where $1\leq i\leq e-1$. For each $kp^{i}$, there are $\varphi(kp^{i})$ elements whose order is $kp^{i}$ and there are $\frac{\varphi(kp^{i})}{2}$ different $\alpha\beta^{-1}+\alpha^{-1}\beta-2$ 's. Thus, there are $\sum^{e-1}_{i=0}\frac{\varphi(kp^{i})}{2}$ choices of $\alpha\beta^{-1}+\alpha^{-1}\beta-2 $.

As a result of ${\rm ord}(\omega_{1}\omega^{-1}_{2})>2$, we have $\alpha\beta^{-1}+\alpha^{-1}\beta-2$ is a unit. The number of choices of $b$ is $(p-1)p^{e-1}$. Once $b$ and $\alpha\beta^{-1}+\alpha^{-1}\beta-2$ are chosen, $a$ is uniquely determined. Hence, for each $k$, there are $(k-1)(p-1)p^{2e-2}\sum^{e-1}_{i=0}\frac{\varphi(kp^{i})}{2}$ IPRNGs of period $k-1$. The proof is completed.
\end{proof}
\begin{proposition}
Suppose $f(t)$ is reducible in ${\rm Z}_{p^{e}}$. If $(\pi-\omega_{1})(\pi-\omega_{2})^{-1}\notin\Omega$, then $L(x_{0};a,b)$ traverses the set $\{k=k_{1}k_{2}:2<k_{1}<p-1,k_{1}\mid p-1,k_{2}\mid p^{e-1}\}$. For each $k$, there are $(p-(k_{1}-1))(p-1)p^{2e-2}\frac{\varphi(k)}{2}$ IPRNGs of period $k$
\end{proposition}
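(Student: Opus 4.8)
The plan is to run the same two-phase argument (period analysis, then counting) used for Propositions~1--3, now exploiting that the hypothesis $(\pi-\omega_1)(\pi-\omega_2)^{-1}\notin\Omega$ forces the whole orbit to remain in ${\rm Z}^{\times}_{p^e}$. First I would record that, by the discussion preceding the proposition (the general term $y'_n=(\omega_1-\omega_2)^{-1}((\pi-\omega_2)\omega_1^n+(\omega_1-\pi)\omega_2^n)$ together with the analysis of (9)), the condition $(\pi-\omega_1)(\pi-\omega_2)^{-1}\notin\Omega$ is exactly the statement that $y'_n\neq0$ for every $n$; hence $y_n\in{\rm Z}^{\times}_{p^e}$ for all $n$, so Lemma~4 applies with no restriction on $m$, giving $x_n=y_{n+1}y_n^{-1}$ for all $n$ and making Lemma~6 available at every exponent.

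For the generic initial values, where both $x_0-\alpha$ and $x_0-\beta$ are units, $(x_0-\alpha)(x_0-\beta)$ is a unit and Lemma~6 collapses to $x_n=x_0\iff\alpha^n=\beta^n\iff(\alpha\beta^{-1})^n=1$, so $L(x_0;a,b)={\rm ord}(\alpha\beta^{-1})$. The next step is to decompose this order. Since ${\rm Z}^{\times}_{p^e}$ is the internal direct product of a cyclic group of order $p-1$ with one of order $p^{e-1}$, and reduction modulo $p$ sends $\alpha\beta^{-1}$ to $\omega_1\omega_2^{-1}$, the order factors as ${\rm ord}(\alpha\beta^{-1})=k_1k_2$ with $k_1={\rm ord}(\omega_1\omega_2^{-1})\mid p-1$ and $k_2\mid p^{e-1}$ coprime. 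The bound $k_1>2$ I would import from Proposition~1's reasoning ($b$ a unit forces $p\nmid\alpha+\beta$, so $\omega_1\omega_2^{-1}\neq\pm1$, i.e. ${\rm ord}(\omega_1\omega_2^{-1})>2$), and the new upper bound $k_1<p-1$ I would obtain by requiring that admissible generic initial values exist: as $\pi$ ranges over ${\rm Z}_p\setminus\{\omega_2\}$ the M\"obius map $\pi\mapsto(\pi-\omega_1)(\pi-\omega_2)^{-1}$ is a bijection onto ${\rm Z}_p\setminus\{1\}$, so among the $p$ residues $\pi$ exactly $k_1-1$ produce a value inside $\Omega$ and the remaining $p-(k_1-1)$ are admissible; generic $x_0$ (both differences units) require $\pi\notin\{\omega_1,\omega_2\}$, which forces $k_1\le p-2$.

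For the counting I would first count the pairs $(a,b)$. As in Proposition~3, $\alpha\beta^{-1}$ and $\alpha^{-1}\beta$ are the two mutually inverse roots of $g(t)=t^2+(a^{-1}b^2+2)t+1$, so the $\varphi(k)$ elements of order $k=k_1k_2$ pair off into $\varphi(k)/2$ admissible values of $a^{-1}b^2+2$ (legitimate because ${\rm ord}>2$ makes $\alpha\beta^{-1}\neq\alpha^{-1}\beta$ and $\alpha\beta^{-1}+\alpha^{-1}\beta-2$ a unit, which also yields $\alpha,\beta\in{\rm Z}_{p^e}$ via $\beta=b(\alpha\beta^{-1}+1)^{-1}$, so $f$ is indeed reducible). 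Coupling each with the $(p-1)p^{e-1}$ choices of the unit $b$ determines $a$ uniquely, giving $(p-1)p^{e-1}\varphi(k)/2$ pairs. For a fixed $(a,b)$ the admissible $x_0$ are those whose leading digit $\pi$ is one of the $p-(k_1-1)$ admissible residues, with $h_1,\dots,h_{e-1}$ free, i.e. $(p-(k_1-1))p^{e-1}$ values; multiplying yields the claimed $(p-(k_1-1))(p-1)p^{2e-2}\varphi(k)/2$.

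The delicate step --- and the one I would treat most carefully --- is the boundary residues $\pi\in\{\omega_1,\omega_2\}$, which the tally $p-(k_1-1)$ includes but for which one of $x_0-\alpha,x_0-\beta$ is a zero divisor rather than a unit. There Lemma~6 no longer reduces to $(\alpha\beta^{-1})^n=1$; writing $p^s\,\|\,(x_0-\alpha)$ one is pushed into the reduction-order computation of Proposition~2, namely $L={\rm ord}(\eta^e_s(\alpha\beta^{-1}))$, and I would have to verify that these orbits are assigned to the correct period before folding them into the period-$k$ count. Confirming that the decomposition $k=k_1k_2$ is exhaustive (every such $k$ is realized by some $\alpha\beta^{-1}$) and reconciling these zero-divisor orbits with the stated count is where the genuine work lies; the unit-versus-unit generic case is routine once Lemma~6 is in place.
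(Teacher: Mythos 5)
Your proposal follows the paper's proof essentially step for step: the period analysis reduces to $L(x_{0};a,b)={\rm ord}(\alpha\beta^{-1})$ via Lemma~6, the decomposition ${\rm ord}(\alpha\beta^{-1})=k_{1}k_{2}$ with $2<k_{1}<p-1$ is obtained the same way (the lower bound because $b$ is a unit, the upper bound because an admissible $\pi$ must exist outside $\Omega$), and the count is assembled from the same three factors $(p-(k_{1}-1))p^{e-1}$ choices of $x_{0}$, $(p-1)p^{e-1}$ choices of $b$, and $\frac{\varphi(k)}{2}$ choices of $\alpha\beta^{-1}+\alpha^{-1}\beta-2$. The one place you go beyond the paper is the ``delicate step'' you flag at the end, and your worry is well founded: the paper simply declares all $p-(k_{1}-1)$ residues $\pi$ with $(\pi-\omega_{1})(\pi-\omega_{2})^{-1}\notin\Omega$ to be period-$k$ initial values, without noting that this tally includes $\pi\in\{\omega_{1},\omega_{2}\}$, for which one of $x_{0}-\alpha$, $x_{0}-\beta$ is a zero or a zero divisor and the period is instead $1$ or ${\rm ord}(\eta^{e}_{k_{3}}(\alpha\beta^{-1}))$ --- precisely the initial values that Proposition~6 counts again separately. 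So the issue you isolate is an overlap between Propositions~5 and~6 in the paper itself rather than a defect peculiar to your argument; a fully rigorous version would exclude those two residue classes of $\pi$ from the $x_{0}$ count here (or from Proposition~6), which changes the stated constants.
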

\begin{proof}
Period analysis.

By recurrence relation (5), we can get that $x_{n}=x_{0}$ if and only if $(\alpha\beta^{-1})^{-1}=1$. Thus, $L(x_{0};a,b)={\rm ord}(\alpha\beta^{-1})$. In Proposition 4, we have proven that ${\rm ord}(\omega_{1}\omega^{-1}_{2})>2$. Since $\alpha\beta^{-1} \in {\rm Z}^{\times}_{p^{e}}$, it is valid that ${\rm ord}(\alpha\beta^{-1})$ traverse the set $\{k=k_{1}k_{2}:k_{1}>2,k_{1}\mid p-1,k_{2}\mid p^{e-1}\}$.

Since $(\pi-\omega_{1})(\pi-\omega_{2})^{-1}\notin \Omega$, it is valid that $S(x_{0};a,b)$ does not contain any element in $(p)$. Thus, $\omega_{1}\omega_{2}^{-1}$ is not a primitive element in ${\rm Z}_{p}$, which means that ${\rm ord}(\omega_{1}\omega_{2}^{-1})\neq p-1$. Hence, $L(x_{0};a,b)$ traverses the set $\{k=k_{1}k_{2}:2<k_{1}<p-1,k_{1}\mid p-1,k_{2}\mid p^{e-1}\}$.

Counting.

For each $L(x_{0};a,b)=k$, there are $p-(k_{1}-1)$ $\pi$'s such that $(\pi-\omega_{1})(\pi-\omega_{2})^{-1}\notin\Omega$ and $p$ choices of $h_{i}$ for all $i=1,2,\ldots,e-1$. Thus, there are $(p-(k_{1}-1))p^{e-1}$ choices of $x_{0}$.

The rest of the proof is the same as which in in Proposition 4, thus, we omit it. Finally, we have for each $k$, there are $(p-(k_{1}-1))(p-1)p^{2e-2}\frac{\varphi(k)}{2}$ IPRNGs of period $k$. The proof is completed.
\end{proof}
\begin{proposition}
Suppose $f(t)$ is reducible in ${\rm Z}_{p^{e}}[t]$ and $\alpha-\beta$ is a unit. If either $x_{0}-\alpha$ or $x_{0}-\beta$ is a zero, then $L(x_{0};a,b)=1$. There are $(p-3)(p-1)p^{2e-2}$ of period $1$. If either $x_{0}-\alpha$ or $x_{0}-\beta$ is a zero divisor, then $L(x_{0};a,b)$ traverses the set $\{k=k_{1}k_{2}:k_{1}>2,k_{1}\mid p-1,k_{2}\mid p^{e-k_{3}-1},1\leq k_{3} \leq e-1\}$. For each $k_{3}$, there are $\varphi(k)(p-1)^{2}p^{2e-2}$ IPRNGs of period $k$.
\end{proposition}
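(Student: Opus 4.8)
The plan is to treat the two regimes of the statement separately, exactly as the ``zero'' versus ``zero divisor'' dichotomy suggests, and in the zero-divisor regime to recycle the reduction-map argument already used in the proof of Proposition 2. Throughout I will exploit that, because here $\pi=\omega_{1}$ or $\pi=\omega_{2}$, the discussion preceding this proposition guarantees $y'_{n}\neq0$ for all $n$, so every $y_{n}$ is a unit; hence Lemma 4 applies at every step, $x_{n}=y_{n+1}y_{n}^{-1}$ holds for all $n$, and Lemma 6 may be invoked freely.

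For the \emph{period analysis}, first suppose $x_{0}-\alpha$ or $x_{0}-\beta$ is a zero, say $x_{0}=\alpha$. Substituting into the general term (6) collapses it to $y_{n}=\alpha^{n}=x_{0}^{n}$, whence $x_{n}=y_{n+1}y_{n}^{-1}=x_{0}$ for every $n$ and $L(x_{0};a,b)=1$. Next suppose one of $x_{0}-\alpha,x_{0}-\beta$ is a zero divisor; since $\alpha\not\equiv\beta\pmod p$ they cannot both be, so assume $x_{0}-\alpha$ is a zero divisor with largest dividing power $p^{k_{3}}$ ($1\le k_{3}\le e-1$) while $x_{0}-\beta$ is a unit. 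I would feed this into Lemma 6, cancel the unit factor $x_{0}-\beta$, write $x_{0}-\alpha=up^{k_{3}}$ with $u$ a unit, and reduce the resulting identity $p^{k_{3}}(\alpha^{n}-\beta^{n})\equiv0\pmod{p^{e}}$ to $(\alpha\beta^{-1})^{n}\equiv1\pmod{p^{e-k_{3}}}$. Thus $L(x_{0};a,b)={\rm ord}(\eta^{e}_{k_{3}}(\alpha\beta^{-1}))$. Since $\eta^{e}_{k_{3}}(\alpha\beta^{-1})$ lives in the cyclic group ${\rm Z}^{\times}_{p^{e-k_{3}}}$ of order $(p-1)p^{e-k_{3}-1}$ and its image mod $p$ is $\omega_{1}\omega_{2}^{-1}$, whose order exceeds $2$ by the computation in Proposition 4, this order equals $k_{1}k_{2}$ with $k_{1}>2$, $k_{1}\mid p-1$, $k_{2}\mid p^{e-k_{3}-1}$; every such value is attained, giving the stated set.

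For the \emph{counting}, in the zero case I would count unordered pairs $\{\alpha,\beta\}$ of units with $\alpha-\beta$ and $\alpha+\beta$ units: there are $(p-1)p^{e-1}$ choices for $\alpha$ and, since $p>3$, exactly $p-3$ admissible residues $\omega_{2}$ (excluding $0,\omega_{1},-\omega_{1}$) times $p^{e-1}$, giving $\tfrac{(p-1)(p-3)p^{2e-2}}{2}$ pairs; the two choices $x_{0}\in\{\alpha,\beta\}$ then yield $(p-3)(p-1)p^{2e-2}$ generators of period $1$. In the zero-divisor case, fix $k_{3}$ and a target period $k=k_{1}k_{2}$ and count \emph{ordered} triples $(\alpha,\beta,x_{0})$ in which $\alpha$ is declared to be the root congruent to $x_{0}$ modulo $p$; this labelling is forced by $(a,b,x_{0})$, so it introduces no double counting. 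Choosing $\beta$ freely gives $(p-1)p^{e-1}$ options; choosing $\gamma=\alpha\beta^{-1}$ with ${\rm ord}(\eta^{e}_{k_{3}}(\gamma))=k$ gives $\varphi(k)$ admissible residues mod $p^{e-k_{3}}$, each with $p^{k_{3}}$ unit lifts, and automatically $\gamma\not\equiv\pm1\pmod p$ so $\alpha=\gamma\beta$ is a legitimate second root; finally $x_{0}$ with exact dividing power $p^{k_{3}}$ on $x_{0}-\alpha$ admits $(p-1)p^{e-k_{3}-1}$ values. The product $(p-1)p^{e-1}\cdot\varphi(k)p^{k_{3}}\cdot(p-1)p^{e-k_{3}-1}=\varphi(k)(p-1)^{2}p^{2e-2}$ is exactly the asserted count.

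The main obstacle I anticipate is the combinatorial bookkeeping rather than any new idea: one must be scrupulous that the close-root labelling sets up a genuine bijection between IPRNGs $(a,b,x_{0})$ of this type and the ordered data $(\beta,\gamma,x_{0})$, so that, unlike in the zero case, no spurious factor of $2$ appears; and one must justify that the reduction ${\rm Z}^{\times}_{p^{e}}\to{\rm Z}^{\times}_{p^{e-k_{3}}}$ is exactly $p^{k_{3}}$-to-one on units and that demanding the largest dividing power to equal $k_{3}$ leaves precisely $(p-1)p^{e-k_{3}-1}$ values of $x_{0}$. Confirming that ${\rm ord}(\eta^{e}_{k_{3}}(\alpha\beta^{-1}))$ realizes every $k_{1}k_{2}$ in the claimed range, and no others, likewise requires the cyclicity of ${\rm Z}^{\times}_{p^{e-k_{3}}}$ together with the $k_{1}>2$ constraint inherited from Proposition 4.
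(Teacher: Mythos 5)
Your proposal is correct and follows essentially the same route as the paper: the same collapse of (6) to $y_{n}=x_{0}^{n}$ in the zero case, the same reduction via Lemma 6 to $L(x_{0};a,b)={\rm ord}(\eta^{e}_{k_{3}}(\alpha\beta^{-1}))$ in the zero-divisor case, and the same counts. The only difference is cosmetic bookkeeping in the final count — you parametrize by ordered data $(\beta,\gamma,x_{0})$ with the root nearest $x_{0}$ labelled $\alpha$, whereas the paper uses unordered root pairs with a factor of $2$ on $x_{0}$ and a factor of $\tfrac{1}{2}$ on $\alpha\beta^{-1}+\alpha^{-1}\beta-2$; both yield $\varphi(k)(p-1)^{2}p^{2e-2}$.
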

\begin{proof}
Period analysis.

If either $x_{0}-\alpha$ or $x_{0}-\beta$ is a zero, then $y_{n}=x^{n}_{0}$. Thus, $x_{n}=x_{0}$ for all $n=1,2,\ldots$, which means that $L(x_{0};a,b)=1$.

If either $x_{0}-\alpha$ or $x_{0}-\beta$ is a zero divisor, we suppose $1\leq k_{3}\leq e-1$ is the largest integer such that $p^{k_{3}}\mid x_{0}-\alpha$ or $p^{k_{3}}\mid x_{0}-\beta$, then we can get that $L(x_{0};a,b)=\eta^{e}_{k_{3}}(\alpha\beta^{-1})$. Thus, $L(x_{0};a,b)$ traverses the set $\{k=k_{1}k_{2}:k_{1}>2,k_{1}\mid p-1,k_{2}\mid p^{e-k_{3}-1},1\leq k_{3} \leq e-1\}$.

Counting.

For $L(x_{0};a,b)=1$, $\alpha,\beta$ traverses all suitable elements in ${\rm Z}_{p^{e}}$, i.e. both $\alpha-\beta$ and $\alpha+\beta$ are units, there are $\frac{(p-3)(p-1)p^{2e-2}}{2}$ pairs of $\alpha,\beta$. Once $\alpha,\beta$ are chosen, there are $2$ choices of $x_{0}$. Thus, there are $(p-3)(p-1)p^{2e-2}$ IPRNGs of period $1$.

For $L(x_{0};a,b)=k$, since either $p^{k_{3}}\mid x_{0}-\alpha$ or $p^{k_{3}}\mid x_{0}-\beta$, it is valid that $\pi=\omega_{1}$ or $\pi=\omega_{2}$ and $p-1$ choices of $h_{k_{3}}$ for all $i=k_{3}+1,k_{3}+2,\ldots,e-1$. Thus, there are $2(p-1)p^{e-k_{3}-1}$ choices of $x_{0}$ altogether.

Let $\alpha\beta^{-1}=\sum^{e-k_{3}-1}_{i=0}a_{i}p^{i}+\sum^{e-1}_{i=e-k_{3}}a_{i}p^{i}$. Since $\eta^{e}_{k_{3}}(\alpha\beta^{-1})=k$, there are $\varphi(k)$ choices of $\eta^{e}_{k_{3}}(\alpha\beta^{-1})$. Once $\eta^{e}_{k_{3}}(\alpha\beta^{-1})$ is chosen, which means that $a_{i}$ for all $i=0,1,\ldots,e-k_{3}-1$ are chosen, there are $p$ choices of $a_{i}$ for all $i=e-k_{3},e-k_{3}+1,\ldots,e-1$. Thus, there are $\varphi(k)p^{k_{3}}$ choices of $\alpha\beta^{-1}$. Then, there are $\frac{\varphi(k)p^{k_{3}}}{2}$ different $\alpha\beta^{-1}+\alpha^{-1}\beta-2$ 's. The number of choices of $b$ is $(p-1)p^{e-1}$. Once $b$ and $\alpha\beta^{-1}+\alpha^{-1}\beta-2$ are chosen, $a$ is uniquely determined by $b^{2}(\alpha\beta^{-1}+\alpha^{-1}\beta-2)^{-1}$. Hence, for each $k$, there are $\varphi(k)(p-1)^{2}p^{2e-2}$ IPRNGs of period $k$. The proof is completed.
\end{proof}
\subsubsection{$f(t)$ is irreducible in ${\rm Z}_{p^{e}}[t]$}
In this case, $f(t)$ must be reducible in ${\rm Z}_{p^{e}}[t]/(f(t))$. Since $p\nmid\alpha-\beta$, it is valid that $t-\alpha$ and $t-\beta$ are coprime in ${\rm Z}_{p}$. Thus, by the Hensel's lemma in \cite{w1}, we can get that $f(t)$ is a basic irreducible polynomial in ${\rm Z}_{p}$. Therefore, ${\rm Z}_{p^{e}}[t]/(f(t))$ is a Galois ring which is isomorphic with ${\rm R}_{e,2}$.

Let $\alpha=\sum^{e-1}_{i=0}c_{i}p^{i}$, $\beta=\sum^{e-1}_{i=0}d_{i}p^{i}$ and $x_{0}=\sum^{e-1}_{i=0}h_{i}p^{i}$, where $c_{i},d_{i}\in\Gamma_{e,2}$ and $h_{i}\in{\rm Z}_{p}$ for all $i=0,1,\ldots,e-1$, then it follows from $\alpha-\beta$ is a unit that $c_{0}\neq d_{0}$.

For presentation convenience, we also denote $c_{0}=\omega_{1},d_{0}=\omega_{2}$ and $h_{0}=\pi$.

Since both $\alpha$ and $\beta$ are not in ${\rm Z}_{p^{e}}$, it is valid that both $x_{0}-\alpha$ and $x_{0}-\beta$ are units, which means that both $\pi-\omega_{1}$ and $\pi-\omega_{2}$ are units. As it is discussed in Case A, we can get that if $(\pi-\omega_{1})(\pi-\omega_{2})^{-1}\in\Omega$, then $S(x_{0};a,b)$ must contain some elements in $(p)$; if $(\pi-\omega_{1})(\pi-\omega_{2})^{-1}\notin\Omega$, then $S(x_{0};a,b)$ does not contain any element in $(p)$.

Now, we are ready to present our results on the period distribution of IPRNGs for this case.
\begin{proposition}
Suppose $f(t)$ is irreducible in ${\rm Z}_{p^{e}}[t]$ and $p\nmid \alpha-\beta$. If $(\pi-\omega_{1})(\pi-\omega_{2})^{-1}\in\Omega$, then $L(x_{0};a,b)$ traverses the set $\{k-1:k>2,k\mid p+1\}$. For each $k$, there are $(k-1)(p-1)p^{2e-2}\sum^{e-1}_{i=0}\frac{\varphi(kp^{i})}{2}$ IPRNGs of period $k-1$.
\end{proposition}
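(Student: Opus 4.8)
The plan is to mirror the proof of Proposition 4 almost verbatim, since the two statements have identical structure; the \emph{only} genuine difference is that the reduction $\omega_{1}\omega_{2}^{-1}$ now lives in ${\rm GF}(p^{2})\setminus{\rm GF}(p)$ rather than in ${\rm GF}(p)$, and this is exactly what replaces the divisibility ${\rm ord}(\omega_{1}\omega_{2}^{-1})\mid p-1$ by ${\rm ord}(\omega_{1}\omega_{2}^{-1})\mid p+1$. First I would record the arithmetic of the reductions. Because $f(t)$ is a basic irreducible polynomial over ${\rm Z}_{p}$, its reductions $\omega_{1}=c_{0}$ and $\omega_{2}=d_{0}$ are the two conjugate roots of an irreducible quadratic over ${\rm GF}(p)$, so they lie in ${\rm GF}(p^{2})\setminus{\rm GF}(p)$ and satisfy the Frobenius relation $\omega_{2}=\omega_{1}^{p}$. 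From this I would compute $(\omega_{1}\omega_{2}^{-1})^{p+1}=\omega_{1}^{(1-p)(1+p)}=\omega_{1}^{1-p^{2}}=1$, which gives ${\rm ord}(\omega_{1}\omega_{2}^{-1})\mid p+1$.

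For the period analysis I would reuse the argument of Proposition 4 unchanged. From $b=\alpha+\beta\in{\rm Z}^{\times}_{p^{e}}$ together with $\alpha-\beta$ a unit one gets $p\nmid\alpha\beta^{-1}-\alpha^{-1}\beta$, hence $\omega_{1}\omega_{2}^{-1}\neq\pm1$ and ${\rm ord}(\omega_{1}\omega_{2}^{-1})>2$. Assuming $(\pi-\omega_{1})(\pi-\omega_{2})^{-1}\in\Omega$, the sequence $S(x_{0};a,b)$ meets $(p)$, so $L(x_{0};a,b)=L(b;a,b)$; specializing to $x_{0}=b$, i.e. $\pi=\omega_{1}+\omega_{2}$, equation (9) collapses to $(\omega_{1}\omega_{2}^{-1})^{n+1}=1$, so the least $n'$ with $y'_{n'}=0$ is ${\rm ord}(\omega_{1}\omega_{2}^{-1})-1$, and Lemma 4 converts this into $L(x_{0};a,b)={\rm ord}(\omega_{1}\omega_{2}^{-1})-1=k-1$. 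Combined with $k\mid p+1$ and $k>2$, this yields the stated range $\{k-1:k>2,\ k\mid p+1\}$.

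The $(a,b)$ bookkeeping is then identical to Proposition 4: via $a=b^{2}(\alpha\beta^{-1}+\alpha^{-1}\beta-2)^{-1}$ there are $(p-1)p^{e-1}$ choices of $b$, and, since the order of $\alpha\beta^{-1}$ in ${\rm R}^{\times}_{e,2}$ equals $kp^{i}$ for some $0\le i\le e-1$, there are $\sum_{i=0}^{e-1}\varphi(kp^{i})/2$ admissible values of $\alpha\beta^{-1}+\alpha^{-1}\beta-2$. The one step that is not a literal copy, and which I expect to be the main obstacle, is counting the $\pi$'s with $(\pi-\omega_{1})(\pi-\omega_{2})^{-1}\in\Omega$: here $\pi\in{\rm GF}(p)$ while $\Omega\subset{\rm GF}(p^{2})$, so I must check that each of the $k-1$ elements $\gamma\in\Omega$ has its unique M\"obius preimage $\pi=(\omega_{1}-\gamma\omega_{2})/(1-\gamma)$ genuinely in ${\rm GF}(p)$ and not merely in ${\rm GF}(p^{2})$.

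I expect to settle this last point by exploiting that every $\gamma\in\Omega$ satisfies $\gamma^{p+1}=1$, hence $\gamma^{p}=\gamma^{-1}$. Applying the Frobenius $x\mapsto x^{p}$ to $\pi=(\omega_{1}-\gamma\omega_{2})/(1-\gamma)$ and using $\omega_{1}^{p}=\omega_{2}$, $\omega_{2}^{p}=\omega_{1}$ together with $\gamma^{p}=\gamma^{-1}$ gives, after clearing $\gamma$, the identity $\pi^{p}=(\gamma\omega_{2}-\omega_{1})/(\gamma-1)=\pi$, so indeed $\pi\in{\rm GF}(p)$. Since the M\"obius map is injective and $1\notin\Omega$, this produces exactly $k-1$ admissible values of $h_{0}=\pi$, and then $p$ free choices for each $h_{1},\dots,h_{e-1}$, giving $(k-1)p^{e-1}$ choices of $x_{0}$. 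Multiplying the three factors yields $(k-1)(p-1)p^{2e-2}\sum_{i=0}^{e-1}\varphi(kp^{i})/2$ IPRNGs of period $k-1$, as claimed.
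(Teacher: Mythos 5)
Your proof is correct and follows essentially the same route as the paper: reduce mod $p$, show ${\rm ord}(\omega_{1}\omega_{2}^{-1})>2$, use membership of $(\pi-\omega_{1})(\pi-\omega_{2})^{-1}$ in $\Omega$ to reduce to $x_{0}=b$ and get $L={\rm ord}(\omega_{1}\omega_{2}^{-1})-1$, then count $x_{0}$, $b$ and $\alpha\beta^{-1}+\alpha^{-1}\beta-2$ exactly as in Proposition 4. The only places you diverge are actually improvements: the paper deduces ${\rm ord}(\omega_{1}\omega_{2}^{-1})\mid p+1$ from ``divides $p^{2}-1$ but not $p-1$,'' which is not a valid inference in general, whereas your Frobenius computation $(\omega_{1}\omega_{2}^{-1})^{p+1}=\omega_{1}^{1-p^{2}}=1$ closes that gap, and your check that the M\"obius preimages of $\Omega$ are fixed by $x\mapsto x^{p}$ (hence lie in ${\rm GF}(p)$) justifies the count of $k-1$ values of $\pi$, which the paper simply asserts.
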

\begin{proof}
Period analysis.

Since $b\in{\rm Z}^{\times}_{p^e}$ and $b=\alpha+\beta$, it holds that $p\nmid \alpha+\beta$. Combining $p\nmid \alpha-\beta$, we have $p\nmid \alpha\beta^{-1}-\alpha^{-1}\beta$, which means that $\omega_{1}\omega^{-1}_{2}\neq \omega_{1}^{-1}\omega_{2}$. Since $(\omega_{1}\omega^{-1}_{2})(\omega_{1}^{-1}\omega_{2})=1$, it is valid that ${\rm ord}(\omega_{1}\omega^{-1}_{2})>2$.

Since $(\pi-\omega_{1})(\pi-\omega_{2})^{-1}\in\Omega$, it is valid that $S(x_{0};a,b)$ contains some elements in $(p)$. Thus, $L(x_{0};a,b)=L(b;a,b)$. Thus, $L(x_{0};a,b)={\rm ord}(\omega_{1}\omega^{-1}_{2})-1$. Since $\omega_{1}\omega^{-1}_{2}\in \Gamma_{e,2}$, it must hold that ${\rm ord}(\omega_{1}\omega^{-1}_{2})\mid p^{2}-1$. Notice that $\alpha\beta^{-1}\notin{\rm Z}_{p^{e}} $. Since ${\rm Z}_{p^{e}}\subseteq {\rm Z}_{p^{e}}[t]/(f(t))$, it is valid that all units in ${\rm Z}_{p^{e}}$ are contained in ${\rm Z}_{p^{e}}[t]/(f(t))$, which means that ${\rm ord}(\omega_{1}\omega^{-1}_{2})\nmid p-1$. Thus, ${\rm ord}(\omega_{1}\omega^{-1}_{2})\mid p+1$. Hence, $L(x_{0};a,b)$ traverses the set $\{k-1:k>2,k\mid p+1\}$.

Counting.

For $L(x_{0};a,b)=k$, there are $k-1$ $\pi$'s such that $(\pi-\omega_{1})(\pi-\omega_{2})^{-1}\in\Omega$ and $p$ choices of $h_{i}$ for all $i=1,2,\ldots,e-1$. Thus, there are $(k-1)p^{e-1}$ choices of $x_{0}$.

Since $\alpha$ and $\beta$ are roots of $f(t)$, it can be verified that $\alpha\beta^{-1}$ and $\alpha^{-1}\beta$ are roots of $g(t)=t^{2}+(a^{-1}b^{2}+2)t+1$. Therefore, $a^{-1}b^{2}+2=\alpha\beta^{-1}+\alpha^{-1}\beta$. Thus, $a=b^{2}(\alpha\beta^{-1}+\alpha^{-1}\beta-2)$. By the theory of Galois rings, $\alpha\beta^{-1}$ can be expressed as $\alpha\beta^{-1}=xy$, where $x\in\Gamma_{e,2}$ and $y\in 1+(p)$. Thus, ${\rm ord}(\alpha\beta^{-1})=kp^{i}$, where $1\leq i\leq e-1$. For each $kp^{i}$, there are $\varphi(kp^{i})$ elements whose order is $kp^{i}$ and there are $\frac{\varphi(kp^{i})}{2}$ different $\alpha\beta^{-1}+\alpha^{-1}\beta-2$ 's. Thus, there are $\sum^{e-1}_{i=0}\frac{\varphi(kp^{i})}{2}$ choices of $\alpha\beta^{-1}+\alpha^{-1}\beta-2 $.

As a result of ${\rm ord}(\omega_{1}\omega^{-1}_{2})>2$, we have $\alpha\beta^{-1}+\alpha^{-1}\beta-2$ is a unit. The number of choices of $b$ is $(p-1)p^{e-1}$. Once $b$ and $\alpha\beta^{-1}+\alpha^{-1}\beta-2$ are chosen, $a$ is uniquely determined. Hence, for each $k$, there are $(k-1)(p-1)p^{2e-2}\sum^{e-1}_{i=0}\frac{\varphi(kp^{i})}{2}$ IPRNGs of period $k$. The proof is completed.
\end{proof}
\begin{proposition}
Suppose $f(t)$ is irreducible in ${\rm Z}_{p^{e}}[t]$ and $p\nmid \alpha-\beta$. If $(\pi-\omega_{1})(\pi-\omega_{2})^{-1}\notin\Omega$, then $L(x_{0};a,b)$ traverses the set $\{k=k_{1}k_{2}:2<k_{1}<p+1,k_{1}\mid p+1,k_{2}\mid p^{e-1}\}$. For each $k$, there are $(p-(k_{1}-1))(p-1)\frac{\varphi(k)}{2}p^{2e-2}$ IPRNGs of period $k$.
\end{proposition}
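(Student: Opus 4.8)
The plan is to mirror the structure of Proposition 7, adapting each step to the case $(\pi-\omega_{1})(\pi-\omega_{2})^{-1}\notin\Omega$. The starting point is the observation, already established for this subcase, that when $(\pi-\omega_{1})(\pi-\omega_{2})^{-1}\notin\Omega$ the sequence $S(x_{0};a,b)$ contains no element of $(p)$, so $y_{n}$ stays in ${\rm Z}^{\times}_{p^{e}}$ for all $n$ and Lemma~6 applies for every $n$. Consequently $x_{n}=x_{0}$ precisely when $(\alpha\beta^{-1})^{n}=1$, giving $L(x_{0};a,b)={\rm ord}(\alpha\beta^{-1})$ exactly as in the reducible analogue (Proposition~5). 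This reduces the period computation to determining the possible orders of $\alpha\beta^{-1}$ in the Galois ring ${\rm R}_{e,2}$.

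Next I would pin down which orders actually occur. First, since $b\in{\rm Z}^{\times}_{p^{e}}$ forces $p\nmid\alpha+\beta$, together with $p\nmid\alpha-\beta$ we get $\omega_{1}\omega_{2}^{-1}\neq\omega_{1}^{-1}\omega_{2}$, and because $(\omega_{1}\omega_{2}^{-1})(\omega_{1}^{-1}\omega_{2})=1$ this yields ${\rm ord}(\omega_{1}\omega_{2}^{-1})>2$; this is identical to the argument opening Proposition~7. Writing $\alpha\beta^{-1}=xy$ with $x\in\Gamma_{e,2}$ and $y\in 1+(p)$, the order factors as ${\rm ord}(\alpha\beta^{-1})={\rm ord}(x)\cdot{\rm ord}(y)$ with ${\rm ord}(x)\mid p^{2}-1$ and ${\rm ord}(y)\mid p^{e-1}$. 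Because $\alpha\beta^{-1}\notin{\rm Z}_{p^{e}}$ while all units of ${\rm Z}_{p^{e}}$ lie inside the extension, the reduction $\omega_{1}\omega_{2}^{-1}$ cannot have order dividing $p-1$, so ${\rm ord}(\omega_{1}\omega_{2}^{-1})=k_{1}\mid p+1$. The extra constraint distinguishing this proposition from Proposition~7 is that the absence of $(p)$-elements in the orbit means $\omega_{1}\omega_{2}^{-1}$ is not a generator realizing the full $p+1$ cycle, so $k_{1}\neq p+1$; combining with $k_{1}>2$ gives $2<k_{1}<p+1$, and hence $L(x_{0};a,b)$ traverses $\{k=k_{1}k_{2}:2<k_{1}<p+1,\ k_{1}\mid p+1,\ k_{2}\mid p^{e-1}\}$.

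For the counting step I would follow Proposition~5's bookkeeping. Fixing $k=k_{1}k_{2}$, the number of $\pi$ with $(\pi-\omega_{1})(\pi-\omega_{2})^{-1}\notin\Omega$ is $p-(k_{1}-1)$ by the complementary count to the $\in\Omega$ case, and the higher $p$-adic digits $h_{i}$ ($i\geq1$) are free, giving $(p-(k_{1}-1))p^{e-1}$ choices of $x_{0}$. As in Propositions~5 and~7, $\alpha\beta^{-1}$ and $\alpha^{-1}\beta$ are the two roots of $g(t)=t^{2}+(a^{-1}b^{2}+2)t+1$, so a unit value of $\alpha\beta^{-1}+\alpha^{-1}\beta-2$ together with a choice of $b$ determines $a=b^{2}(\alpha\beta^{-1}+\alpha^{-1}\beta-2)^{-1}$ uniquely; there are $\varphi(k)$ elements of order $k$ in ${\rm R}_{e,2}^{\times}$, pairing into $\tfrac{\varphi(k)}{2}$ distinct values of the symmetric sum, and $(p-1)p^{e-1}$ choices of $b$. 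Multiplying these factors should produce $(p-(k_{1}-1))(p-1)\tfrac{\varphi(k)}{2}p^{2e-2}$ IPRNGs of period $k$.

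The main obstacle I anticipate is justifying the exact count $p-(k_{1}-1)$ of admissible residues $\pi$ and confirming that exactly $k_{1}-1$ residues satisfy $(\pi-\omega_{1})(\pi-\omega_{2})^{-1}\in\Omega$; this requires showing the M\"obius-type map $\pi\mapsto(\pi-\omega_{1})(\pi-\omega_{2})^{-1}$ is injective on ${\rm Z}_{p}$ and that its image meets $\Omega$ in precisely $|\Omega|=k_{1}-1$ points, while checking that the two excluded values $\pi=\omega_{1},\omega_{2}$ (where the map degenerates) are correctly accounted for. The order-factorization claim that $\varphi$ multiplies correctly across the $\Gamma_{e,2}$ and $1+(p)$ components, matching the single factor $\varphi(k)$ rather than a sum over $p$-parts as in Proposition~7, is the other point demanding care, since here $k_{2}$ ranges freely over divisors of $p^{e-1}$ rather than being summed.
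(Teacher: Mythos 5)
Your proposal follows essentially the same route as the paper's proof: reduce to $L(x_{0};a,b)={\rm ord}(\alpha\beta^{-1})$ via the no-$(p)$-element observation, exclude $k_{1}=p+1$ from the $\notin\Omega$ hypothesis, use the decomposition $\alpha\beta^{-1}=xy$ with $x\in\Gamma_{e,2}$, $y\in1+(p)$, and then count $\pi$'s, symmetric sums $\alpha\beta^{-1}+\alpha^{-1}\beta-2$, and $b$'s exactly as the paper does (the paper simply defers this bookkeeping to Proposition 7). Your flagged concerns about the injectivity of $\pi\mapsto(\pi-\omega_{1})(\pi-\omega_{2})^{-1}$ and the element-of-order-$k$ count are legitimate points the paper itself glosses over, but they do not constitute a different approach.
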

\begin{proof}
Period analysis.

By the proof of Proposition 7, we can get that ${\rm ord}(\omega_{1}\omega^{-1}_{2})>2$.

Since $(\pi-\omega_{1})(\pi-\omega_{2})^{-1}\notin\Omega$, it is valid that $S(x_{0};a,b)$ does not contain any element in $(p)$. Thus, $L(x_{0};a,b)={\rm ord}(\alpha\beta^{-1})$. In this case, $(\pi-\omega_{1})(\pi-\omega_{2})^{-1}\notin \Omega$. Thus, ${\rm ord}(\omega_{1}\omega^{-1}_{2})\neq p+1$. By the proof of Proposition 7, we have ${\rm ord}(\omega_{1}\omega^{-1}_{2})\mid p+1$. By the theory of Galois rings, $\alpha\beta^{-1}$ can be expressed as $\alpha\beta^{-1}=xy$, where $x\in\Gamma_{e,2}$ and $y\in 1+(p)$. Thus, ${\rm ord}(\alpha\beta^{-1})$ traverses the set $\{k=k_{1}k_{2}:2<k_{1}<p+1,k_{1}\mid p+1,k_{2}\mid p^{e-1}\}$, so does $L(x_{0};a,b)$.

Counting.

For $L(x_{0};a,b)=k$, there are $p-(k_{1}-1)$ $\pi$'s such that $(\pi-\omega_{1})(\pi-\omega_{2})^{-1}\notin\Omega$. There are $p$ choices of $h_{i}$ for all $i=1,2,\ldots,e-1$. Thus, there are $(p-(k_{1}-1))p^{e-1}$ choices of $x_{0}$.

The rest of the counting process is the same as which in Proposition 7, thus, we omit it. There are $(p-(k_{1}-1))(p-1)\frac{\varphi(k)}{2}p^{2e-2}$ IPRNGs of period $k$. The proof is completed.
\end{proof}
\subsection{$\alpha-\beta$ Is a Zero or a Zero Divisor}
Denote ${\rm R}={\rm Z}_{p^{e}}[t]/(f(t))$. Let $\psi$ be the nature homomorphism between ${\rm R}$ and ${\rm R}/p{\rm R}$. If $p\mid \alpha-\beta$, then it holds that $\psi(\alpha-\beta)=\psi(\alpha)-\psi(\beta)=0$. By the analysis in \cite{c3}, we can get that ${\rm R}/p{\rm R}$ is isomorphic with ${\rm GF}(p^{2})$. Thus, $\psi(\alpha)=\psi(\beta)=\omega+p{\rm R}$, where $\omega\in {\rm Z}_{p}$. Since $a=-\alpha\beta$ and $b=\alpha+\beta$, it holds that $f(t)=t^{2}-2\omega t+\omega^{2}$ in ${\rm Z}_{p}$, which means that $f(t)$ is not a basic irreducible polynomial in ${\rm Z}_{p^{e}}$. Thus, ${\rm R}$ is not a Galois ring.

Denote $x_{0}=\pi+\sum^{e-1}_{i=1}h_{i}p^{i}$ where $\pi,h_{i}\in {\rm Z}_{p}$ for all $i=1,2,\ldots,e-1$. Then, it follows from recurrence relation (5) that
$$
y_{n+2}\equiv 2\omega y_{n+1}-\omega^{2}y_{n} ({\rm mod}p).
$$
Let $x'_{n}=x_{n} {\rm mod}p$ and $y'_{n}=y_{n} {\rm mod}p$ for all $n=0,1,\ldots$. Then, we obtain
\begin{eqnarray}
y'_{n+2}= 2\omega y'_{n+1}-\omega^{2}y'_{n}.
\end{eqnarray}

Similar to (6), we have the general term of (10)
\begin{eqnarray}
y'_{n}=\omega^{n}(n(\omega^{-1}\pi-1)+1).
\end{eqnarray}
Thus, if $\pi-\omega$ is a unit, then $y'_{n}$ must contain $0$, which means that $S(x_{0};a,b)$ must contain some elements in $(p)$; Otherwise, $y'_{n}$ dose not contain $0$, which means that $S(x_{0};a,b)$ does not contain any element in $(p)$.

Now, we are ready to present our results on the period distribution of IPRNGs for this case.
\begin{proposition}
Suppose $p\mid\alpha-\beta$. If $\pi-\omega\neq0$, then $L(x_{0};a,b)=p-1$. There are $(p-1)^{2}p^{3e-3}$ IPRNGs of period $p-1$.
\end{proposition}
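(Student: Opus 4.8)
The plan is to mirror the arguments of Propositions 4 and 7: reduce the period to that of the orbit of $b$, read it off from the reduced general term (11), and then count admissible triples $(a,b,x_0)$. First I would record that $\omega$ is a unit of ${\rm Z}_{p}$, since $a=-\alpha\beta\in{\rm Z}^{\times}_{p^{e}}$ forces $p\nmid\alpha$ and hence $\psi(\alpha)=\omega\neq0$. Because $\pi-\omega\neq0$, the discussion following (11) shows $S(x_0;a,b)$ meets $(p)$; as any element of $(p)$ is sent to $b$, every such orbit eventually merges with the orbit of $b$, so $L(x_0;a,b)=L(b;a,b)$ and it suffices to treat $x_0=b$. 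For $x_0=b$ one has $\pi\equiv2\omega\ ({\rm mod}\ p)$, so $\omega^{-1}\pi-1=1$ and (11) collapses to $y'_{n}=\omega^{n}(n+1)$. Since $\omega$ is a unit, $y'_{n}\equiv0$ precisely when $n\equiv-1\ ({\rm mod}\ p)$; thus $y'_{n}\neq0$ for $0\le n\le p-2$ and the least positive index with $y_{n}\in(p)$ is $p-1$. By Lemma 4 this makes $p-2$ the least index with $x_{p-2}\in(p)$, whence $x_{p-1}=b=x_0$.

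The step I expect to require the most care is showing the period is exactly $p-1$ and not a proper divisor. The orbit of $b$ is purely periodic with $x_{p-1}=x_0$, so $L(b;a,b)\mid p-1$. If the period were some $L\mid p-1$ with $L\le p-2$, then pure periodicity would give $x_{p-2-L}=x_{p-2}\in(p)$ with $0\le p-2-L<p-2$, contradicting the minimality of $p-2$ established above; here $p>3$ guarantees every proper divisor of $p-1$ is at most $p-2$, so the index $p-2-L$ is genuinely available. Hence $L(x_0;a,b)=p-1$ for every $x_0$ with $\pi\neq\omega$.

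For the count I would first translate the hypothesis $p\mid\alpha-\beta$ into $f(t)\equiv(t-\omega)^{2}\ ({\rm mod}\ p)$, i.e. $b\equiv2\omega$ and $a\equiv-\omega^{2}\ ({\rm mod}\ p)$ with $\omega\in{\rm Z}^{\times}_{p}$; this condition depends only on the residues of $a,b$. There are $p-1$ admissible residues $\omega$, and for each one $p^{e-1}$ lifts of $a$ and $p^{e-1}$ lifts of $b$, all automatically units since $2\omega\neq0$ and $-\omega^{2}\neq0$, giving $(p-1)p^{2e-2}$ pairs $(a,b)$. For a fixed pair, $\pi=h_{0}$ may be any of the $p-1$ residues different from $\omega$ while $h_{1},\dots,h_{e-1}$ are free, yielding $(p-1)p^{e-1}$ values of $x_0$. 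Multiplying gives $(p-1)p^{2e-2}\cdot(p-1)p^{e-1}=(p-1)^{2}p^{3e-3}$ IPRNGs of period $p-1$, as claimed. Beyond this bookkeeping, the only genuine points are the exactness argument above and the justification that the mod-$p$ double-root condition parametrised by $\omega$ captures $p\mid\alpha-\beta$ exactly.
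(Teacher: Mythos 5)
Your proposal is correct and follows essentially the same route as the paper: reduce to the orbit of $b$ using the fact that $\pi-\omega\neq0$ forces $S(x_{0};a,b)$ to meet $(p)$, specialize (11) to $y'_{n}=(n+1)\omega^{n}$ to locate the first index landing in $(p)$, and count the $p-1$ choices of $\omega$, the $p^{e-1}$ lifts of each of $a,b$, and the $(p-1)p^{e-1}$ admissible $x_{0}$ exactly as the paper does. Your additional argument ruling out a proper divisor of $p-1$ as the period is a welcome piece of rigor that the paper simply asserts, but it does not constitute a different approach.
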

\begin{proof}
Period analysis.

Since $\pi-\omega\neq0$, it is valid that $S(x_{0};a,b)$ contains some elements in $(p)$. Thus, $L(x_{0};a,b)=L(b;a,b)$. Then, we consider the case that $x_{0}=b$, which means that $\pi=2\omega$. By (11), we can get that $y'_{n}=(n+1)\omega^{n}$. Thus, $n'=p-1$ is the smallest integer such that $y'_{n'}=0$. It follows from Lemma 4 that $x'_{n'-1}=0$, which means that $x_{n'-1}\in(p)$. Thus, $x_{n'}=b$, which means that $L(b;a,b)=p-1$, so does $L(x_{0};a,b)$.

Counting.

For $L(x_{0};a,b)=p-1$, since $a,b\in{\rm Z}_{p^{e}}^{\times}$, it must hold that $\omega\in{\rm Z}^{\times}_{p}$. Thus, there are $p-1$ choices of $\omega$. Once $\omega$ is chosen, there are $p^{e-1}$ choices of $a,b$, respectively. Since $S(x_{0};a,b)$ contains some elements in $(p)$, it is valid that $\pi-\omega$ is a unit. there are $p-1$ choices of $\pi$, thus, there are $(p-1)p^{e-1}$ choices of $x_{0}$. Hence, there are $(p-1)^{2}p^{3e-3}$ IPRNGs of period $p-1$. The proof is completed.
\end{proof}
\begin{proposition}
Suppose $p\mid\alpha-\beta$. If either $x_{0}-\alpha$ or $x_{0}-\beta$ is a zero, then $L(x_{0};a,b)=1$. There are $(p-1)p^{2e-2}$ IPRNGs of period $1$. If both $x_{0}-\alpha$ and $x_{0}-\beta$ are zero divisors, then $L(x_{0};a,b)$ traverses set $\{p^{e-k}:1\leq k\leq e-1\}$. For each $k$, there are $(p-1)^{2}p^{3e-k-3}$ IPRNGs of period $p^{e-k}$.
\end{proposition}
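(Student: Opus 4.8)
The plan is to reduce the period to the order of a Lucas-type sequence attached to the characteristic polynomial $f(t)=t^{2}-bt-a$, and then to read off every count from a single clean parametrization of the triples $(a,b,x_{0})$. Throughout I write $v_{p}$ for the $p$-adic valuation on ${\rm Z}_{p^{e}}$. First I would dispose of the case where $x_{0}-\alpha$ or $x_{0}-\beta$ is a zero: then $x_{0}$ is a root of $f(t)$, so $\phi(x_{0})=ax_{0}^{-1}+b=-\alpha\beta\,x_{0}^{-1}+(\alpha+\beta)=x_{0}$ is a fixed point and $L(x_{0};a,b)=1$.

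For the main case, suppose both $x_{0}-\alpha$ and $x_{0}-\beta$ are nonzero zero divisors. Then $x_{0}\equiv\alpha\equiv\beta\pmod{p}$, i.e. $\pi=\omega$, so the general term (11) gives $y'_{n}=\omega^{n}\neq0$; hence $S(x_{0};a,b)$ contains no element of $(p)$ and $x_{n}=y_{n+1}y_{n}^{-1}$ holds for every $n$ by Lemma 4. Since $\alpha-\beta$ is a zero divisor the general term (6) is unavailable, so instead I would use the Lucas-type sequence $s_{n}=\frac{\alpha^{n}-\beta^{n}}{\alpha-\beta}$, defined by $s_{0}=0$, $s_{1}=1$, $s_{n+2}=bs_{n+1}+as_{n}$; being a symmetric polynomial in $\alpha,\beta$ it lies in ${\rm Z}_{p^{e}}$ even though $\alpha-\beta$ is not invertible. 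Writing the companion matrix $M=\begin{pmatrix} b & a\\ 1 & 0\end{pmatrix}$, one has $(y_{n+1},y_{n})^{\mathrm{T}}=M^{n}(x_{0},1)^{\mathrm{T}}$ and $M^{n}=s_{n}M+t_{n}I$ with $t_{n}=as_{n-1}$, so $y_{n}=s_{n}x_{0}+t_{n}$. A short computation with the recurrences for $s_{n},t_{n}$ then shows that $x_{n}=x_{0}$ (that is, $y_{n+1}=x_{0}y_{n}$) holds if and only if $s_{n}f(x_{0})\equiv0\pmod{p^{e}}$, since $y_{n+1}-x_{0}y_{n}=-s_{n}f(x_{0})$. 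Hence $L(x_{0};a,b)$ is the least $n>0$ with $p^{e}\mid s_{n}f(x_{0})$.

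The decisive step is the valuation identity $v_{p}(s_{n})=v_{p}(n)$. Because $\alpha\equiv\beta\pmod{p}$ with $\alpha,\beta$ units, this is the lifting-the-exponent lemma $v_{p}(\alpha^{n}-\beta^{n})=v_{p}(\alpha-\beta)+v_{p}(n)$ when $f(t)$ is reducible (so $\alpha,\beta\in{\rm Z}_{p^{e}}$), and its ramified analogue inside the quadratic extension ${\rm Z}_{p^{e}}[t]/(f(t))$ when $f(t)$ is irreducible; here the hypothesis $p>3$ is exactly what clears the ramification threshold $2/(p-1)<1$ and produces the same formula, so no separate bookkeeping for the two subcases is needed. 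Granting this, with $k:=v_{p}(f(x_{0}))$ the condition $p^{e}\mid s_{n}f(x_{0})$ is equivalent to $v_{p}(n)\ge e-k$, whose least solution is $n=p^{e-k}$. Since both $x_{0}-\alpha,x_{0}-\beta$ are nonzero zero divisors we have $f(x_{0})\neq0$, so $1\le k\le e-1$; thus $L(x_{0};a,b)=p^{e-k}$ and $L$ traverses $\{p^{e-k}:1\le k\le e-1\}$.

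For the counting I would parametrize each admissible triple by $(x_{0},b,g)$ with $g=f(x_{0})=x_{0}^{2}-bx_{0}-a$, recovering $a=x_{0}^{2}-bx_{0}-g$. The requirement that both factors be zero divisors is $x_{0}\equiv\omega\equiv b/2\pmod{p}$, i.e. $2x_{0}\equiv b\pmod{p}$, and together with $v_{p}(g)=k\ge1$ this already forces $p\mid\mathrm{disc}=(2x_{0}-b)^{2}-4g$, hence $p\mid\alpha-\beta$ and $a$ a unit automatically. Counting freely: $x_{0}$ is any unit ($(p-1)p^{e-1}$ choices), $b$ is any unit with $b\equiv2x_{0}\pmod{p}$ ($p^{e-1}$ choices), and $g$ is any element with $v_{p}(g)=k$ ($(p-1)p^{e-k-1}$ choices), giving $(p-1)^{2}p^{3e-k-3}$ triples of period $p^{e-k}$. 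The same parametrization with $g=0$ (that is, $x_{0}$ a root of $f$) yields the $L=1$ count $(p-1)p^{2e-2}$. The hardest part I anticipate is the careful proof of $v_{p}(s_{n})=v_{p}(n)$ in the ramified (irreducible) case and the verification that $(x_{0},b,g)$ is a bijection onto the admissible triples with $v_{p}(f(x_{0}))=k$; once these are in place, the period identity and the remaining counts follow routinely.
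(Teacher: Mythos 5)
Your proposal is correct and reaches the same counts, but it travels a genuinely different road in the two places where the paper is weakest. For the period analysis, the paper simply asserts the criterion $(\alpha^{n-1}+\cdots+\beta^{n-1})(x_{0}-\alpha)(x_{0}-\beta)=0$ (its Eq.~(12)), even though the closed form (6) from which that criterion was originally derived is explicitly unavailable when $\alpha-\beta$ is a zero divisor; your companion-matrix identity $y_{n+1}-x_{0}y_{n}=-s_{n}f(x_{0})$ with $s_{n}=(\alpha^{n}-\beta^{n})/(\alpha-\beta)\in{\rm Z}_{p^{e}}$ proves exactly that criterion without ever inverting $\alpha-\beta$, so you are filling a real gap. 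The key valuation fact is the same in both arguments --- the paper's Eq.~(13) says $s_{n}=n\cdot(\hbox{unit})$, your LTE statement says $v_{p}(s_{n})=v_{p}(n)$ --- but you make explicit why $p>3$ is needed in the ramified (irreducible) subcase, whereas the paper's ``simple calculation'' glosses over it and even mis-describes ${\rm R}/p{\rm R}$ as ${\rm GF}(p^{2})$ when $f\bmod p=(t-\omega)^{2}$ has nilpotent quotient. For the counting, the paper enumerates factorizations $(\alpha,\beta)$ and then perturbs the product $(x_{0}-\alpha)(x_{0}-\beta)$, which is delicate because the factorization of $f$ is not unique once the discriminant lies in $(p)$; your bijective parametrization $(x_{0},b,g)$ with $g=f(x_{0})$ and $a=x_{0}^{2}-bx_{0}-g$ sidesteps this entirely and makes the counts $(p-1)p^{2e-2}$ and $(p-1)^{2}p^{3e-k-3}$ immediate. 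One small slip: a product of two nonzero zero divisors in ${\rm Z}_{p^{e}}$ can still vanish, so ``both factors are zero divisors'' does not imply $f(x_{0})\neq0$; the correct dichotomy is $f(x_{0})=0$ versus $v_{p}(f(x_{0}))=k\le e-1$, which is in fact how your counting is organized, so no numbers change --- the imprecision is inherited from the proposition's own statement.
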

\begin{proof}
Period analysis.

If $\pi-\omega=0$, then $S(x_{0};a,b)$ does not contain any element in $(p)$. Thus, $x_{n}=x_{0}$ if and only if
\begin{eqnarray}
(\alpha^{n-1}+\alpha^{n-2}\beta+\cdots+\beta^{n-1})(x_{0}-\alpha)(x_{0}-\beta)=0.
\end{eqnarray}

Since $\psi(\alpha)=\psi(\beta)=\omega+p{\rm R}$, we denote $\alpha=\omega+px$ and $\beta=\omega+py$, where $x,y\in{\rm R}$. Thus, by simple calculation, we can get hat
\begin{eqnarray}
(\alpha^{n-1}+\alpha^{n-2}\beta+\cdots+\beta^{n-1})=n\omega+npz,
\end{eqnarray}
where $z$ is an element in ${\rm R}$.

If either $x_{0}-\alpha$ or $x_{0}-\beta$ is a zero, which means that $(x_{0}-\alpha)(x_{0}-\beta)=0$, then the smallest $n$ such that (12) holds is $1$. Thus, $L(x_{0};a,b)=1$.

If both $x_{0}-\alpha$ and $x_{0}-\beta$ are zero divisors, we have $p\mid(x_{0}-\alpha)(x_{0}-\beta)$. Thus, we suppose $1\leq k\leq e-1$ is the largest integer such that $p^{k}\mid (x_{0}-\alpha)(x_{0}-\beta)$. Then $n=p^{e-k}$ is the smallest integer such that $(13)$ holds. Thus, $L(x_{0};a,b)=p^{e-k}$.

Counting.

For $L(x_{0};a,b)=1$, we have either $x_{0}-\alpha$ or $x_{0}-\beta$ is a zero. Since $\alpha,\beta\in {\rm Z}^{\times}_{p^{e}}$, it is valid that there are $(p-1)p^{e-1}$ choices of $\alpha$. Once $\alpha$ is chosen, if $\beta=\alpha$, then $\beta$ and $x_{0}$ are uniquely determined by a chosen $\alpha$. Therefore, there are $(p-1)p^{e-1}$ IPRNGs of period $1$ for this case. If $\alpha-\beta$ is a zero divisor, then there are $p^{e-1}-1$ choices of $\beta$. Thus, there are $\frac{(p-1)p^{e-1}(p^{e-1}-1)}{2}$ pairs of $a,b$. Once $\alpha,\beta$ are chosen, there are $2$ choices of $x_{0}$. Thus, there are $(p-1)p^{e-1}(p^{e-1}-1)$ IPRNGs of period $1$ for this case. Hence, there are $(p-1)p^{2e-2}$ IPRNGs of period $1$.

For $L(x_{0};a,b)=p^{e-k}$, we have $p^{k}\mid (x_{0}-\alpha)(x_{0}-\beta)$. Thus, there exists $\alpha',\beta'\in {\rm Z}^{\times}_{p^{e}}$ such that $(x_{0}-\alpha)(x_{0}-\beta)=(x_{0}-\alpha')(x_{0}-\beta')+\sum^{e-1}_{i=k}c_{i}p^{i}$, where $(x_{0}-\alpha')(x_{0}-\beta')=0$,  $c_{k}\in{\rm Z}^{\times}_{p}$ and $c_{i}\in{\rm Z}_{p}$ for all $i=k+1,k+2,\ldots,e-1$. By the counting process of $L(x_{0};a,b)=1$, we have there are $(p-1)p^{2e-2}$ $(x_{0}-\alpha')(x_{0}-\beta')$'s with $(x_{0}-\alpha')(x_{0}-\beta')=0$. Once $x_{0},\alpha,\beta$ are chosen, there are $p-1$ choices of $c_{k}$, $p$ choices of $c_{i}$ for all $i=k+1,k+2,\ldots,e-1$. Thus, there are $(p-1)p^{e-k-1}$ choices of $\sum^{e-1}_{i=k}c_{i}p^{i}$'s. Hence, there are $(p-1)^{2}p^{3e-k-3}$ IPRNGs of period $p^{e-k}$. The proof is completed.
\end{proof}
\begin{theorem}
For IPRNGs with $a\in{\rm Z}^{\times}_{p^{e}}$ and $b\in{\rm Z}^{\times}_{p^{e}}$, the possible periods and the number of each special period are given in Table IV.
\end{theorem}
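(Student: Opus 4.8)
The plan is to treat Theorem 4 as a consolidation of the case analysis carried out in Propositions 4 through 10, so that the remaining work is organizational rather than computational. First I would confirm that these propositions partition the full parameter set $\{(a,b,x_{0}):a,b\in{\rm Z}^{\times}_{p^{e}},\,x_{0}\in{\rm Z}_{p^{e}}\}$ into mutually exclusive and jointly exhaustive families. The top-level split is according to whether $\alpha-\beta$ is a unit (Case A) or lies in $(p)$ (Case B); this dichotomy is exhaustive because in the local ring ${\rm Z}_{p^{e}}[t]/(f(t))$ every element is either a unit or belongs to the maximal ideal. Within Case A one splits by reducibility of $f(t)$ over ${\rm Z}_{p^{e}}[t]$ (subcases A1 and A2) and then by whether $(\pi-\omega_{1})(\pi-\omega_{2})^{-1}\in\Omega$, with the degenerate possibility that $x_{0}-\alpha$ or $x_{0}-\beta$ fails to be a unit handled separately in Proposition 6; Case B is split by whether $\pi-\omega=0$, with Proposition 10 covering the remaining configurations. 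I would write this partition out as an explicit checklist and confirm that every admissible triple lands in exactly one branch.

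Second, I would read off from each proposition its period values together with the associated counts and then merge equal periods across branches, adding their multiplicities. This is where the bookkeeping bites. For instance, period $1$ is produced both by Proposition 6, with count $(p-3)(p-1)p^{2e-2}$, and by Proposition 10, with count $(p-1)p^{2e-2}$, so its row in Table IV must carry the sum $(p-2)(p-1)p^{2e-2}$. More delicately, the families indexed by $k=k_{1}k_{2}$ in Propositions 5 and 8 have $p$-parts $k_{2}$ and prime-to-$p$ parts $k_{1}$, so two such periods coincide only when their $k_{1}$ agree; since $k_{1}\mid p-1$ in Proposition 5 while $k_{1}\mid p+1$ in Proposition 8, a common value would divide $\gcd(p-1,p+1)=2$ and is excluded by the constraint $k_{1}>2$. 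Hence these families are disjoint, and a parallel comparison settles the ``period $k-1$'' families of Propositions 4 and 7 against one another and against the various ``period $k$'' families, where a shift by one could create unexpected coincidences.

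The main obstacle will be exactly this collision analysis: ensuring that period values written in different parametric forms, such as $k-1$ with $k\mid p-1$ versus $k_{1}k_{2}$ with $k_{1}\mid p+1$, are correctly recognized as equal or distinct, so that no multiplicity is double counted or dropped when the rows of Table IV are assembled. To guard against error I would close with a global consistency check: summing every count in Table IV over the full ranges of the indices must reproduce the total number $(p-1)^{2}p^{3e-2}$ of IPRNGs with $a,b\in{\rm Z}^{\times}_{p^{e}}$. This reduction relies on the divisor identities $\sum_{d\mid m}\varphi(d)=m$ and $\sum_{i=0}^{e-1}\varphi(p^{i})=p^{e-1}$, and a successful match provides strong independent evidence that both the partition and the merge were performed correctly.
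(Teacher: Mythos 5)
Your proposal is correct and takes essentially the same route as the paper: the paper supplies no separate proof of this theorem, which is simply the tabulation of Propositions 4--10, and your consolidation --- including the merged period-$1$ count $(p-3)(p-1)p^{2e-2}+(p-1)p^{2e-2}=(p-2)(p-1)p^{2e-2}$ --- reproduces Table IV. Your extra steps (the collision analysis between the period families and the global sum check) are prudent additions that the paper does not record, but they do not alter the argument.
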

\begin{table}[!t]
\renewcommand{\arraystretch}{1.3}
\caption{Period distribution of IPRNGs with $a\in{\rm Z}^{\times}_{p^{e}}$ and $b\in{\rm Z}^{\times}_{p^{e}}$ in ${\rm Z}_{p^{e}}$}
\label{table_example}
\centering
\begin{tabular}{|c|c|}
\hline
\bfseries Periods & \bfseries Number of IPRNGs\\
\hline
 \tabincell{c}{$1$} & $(p-2)(p-1)p^{2e-2}$\\
\hline
\tabincell{c}{$p-1$}& \tabincell{c}{$(p-1)^{2}p^{3e-3}$}\\
\hline
\tabincell{c}{$\{k-1:k>2,k\mid p-1\}$} & \tabincell{c}{$(k-1)(p-1)p^{2e-2}\sum^{e-1}_{i=0}\frac{\varphi(kp^{i})}{2}$}\\
\hline
\tabincell{c}{$\{k-1:k>2,k\mid p+1\}$}& \tabincell{c}{$(k-1)(p-1)p^{2e-2}\sum^{e-1}_{i=0}\frac{\varphi(kp^{i})}{2}$}\\
\hline
\tabincell{c}{$\{k=k_{1}k_{2}:2<k_{1}<p-1,k_{1}\mid p-1,$\\$k_{2}\mid p^{e-1}\}$} & \tabincell{c}{$(p-(k_{1}-1))(p-1)\frac{\varphi(k)}{2}p^{2e-2}$}\\
\hline
\tabincell{c}{$\{k=k_{1}k_{2}:k_{1}>2,k_{1}\mid p-1,$\\$k_{2}\mid p^{e-k_{3}-1},1\leq k_{3}\leq e-1\}$} & \tabincell{c}{$\varphi(k)(p-1)^{2}p^{2e-2}$}\\
\hline
\tabincell{c}{$\{k=k_{1}k_{2}:2<k_{1}<p+1,k_{1}\mid p+1,$\\$k_{2}\mid p^{e-1}\}$} & \tabincell{c}{$(p-(k_{1}-1))(p-1)\frac{\varphi(k)}{2}p^{2e-2}$}\\
\hline
\tabincell{c}{$\{p^{e-k}:1\leq k\leq e-1\}$}& \tabincell{c}{$(p-1)^{2}p^{3e-k-3}$}\\
\hline
\end{tabular}
\end{table}
\begin{remark}
It should be mentioned that $p>3$ is an important condition in Theorem 3, because of some periods require $k>2,k\mid p-1$, which implies that $p>3$.
\end{remark}
\begin{example}
The following example is given to compare experimental and the theoretical results. A computer program has been written to exhaust all possible IPRNGs with $a\in{\rm Z}^{\times}_{5^{3}}$ and $b\in{\rm Z}^{\times}_{5^{3}}$ in ${\rm Z}_{5^{3}}$ to find the period by brute force, the results are shown in Fig. 3.

Table V lists the complete result we have obtained. It provides full information on the period distribution of the IPRNGs. As it is shown in Fig. 3 and Table V, the theoretical and experimental results fit well. The maximal period is $75$ while the minimal period is $1$. The analysis process also indicates how to choose the parameters and the initial values such that the IPRNGs fit specific periods.
\end{example}
\begin{table}[!t]
\renewcommand{\arraystretch}{1.3}
\caption{Period distribution of IPRNGs with $a\in{\rm Z}^{\times}_{5^{3}}$ and $b\in{\rm Z}^{\times}_{5^{3}}$ in ${\rm Z}_{5^{3}}$}
\label{table_example}
\centering
\begin{tabular}{|c|c|c|c|c|}
\hline
Periods &1 &2 &3&4\\
\hline
Number of IPRNGs &7500&125000&195000&290000\\
\hline
Periods &5&10&20&25\\
\hline
Number of IPRNGs &322500 &30000&80000&50000\\
\hline
Periods &75& & & \\
\hline
Number of IPRNGs &150000 & & &\\
\hline
\end{tabular}
\end{table}
\section{Conclusion}
The period distribution of the IPRNGs over $({\rm Z}_{p^{e}},+,\times)$ for prime $p>3$ and integer $e\geq 2$ has been analyzed. Full information on the period distribution of IPRNGs is obtained by some analytical approaches. The analysis process also indicates how to choose the parameters and the initial values such that the IPRNGs fit specific periods.

\ifCLASSOPTIONcaptionsoff
\newpage
\fi

\end{document}